\begin{document}
\title{Asynchronous Gathering of Robots with Finite Memory on a Circle under Limited Visibility\thanks{The first three authors are full time research scholars in Jadavpur University.}}
\titlerunning{Asynchronous Gathering of Robots on a Circle}
% If the paper title is too long for the running head, you can set
% an abbreviated paper title here
%
\author{Satakshi Ghosh\orcidID{0000-0003-1747-4037}\and
Avisek Sharma\orcidID{0000-0001-8940-392X}\and
Pritam Goswami\orcidID{0000-0002-0546-3894}\and
Buddhadeb Sau\orcidID{0000-0001-7008-6135}}
\authorrunning{S. Ghosh, A. Sharma, P. Goswami, and B. Sau}
% First names are abbreviated in the running head.
% If there are more than two authors, 'et al.' is used.
%
\institute{Jadavpur University, 188, Raja S.C. Mallick Rd,
Kolkata 700032, India\\
\email{\{satakshighosh.math.rs,aviseks.math.rs, pritamgoswami.math.rs,buddhadeb.sau\}@jadavpuruniversity.in}}
\maketitle             % typeset the header of the contribution
\begin{abstract}
Consider a set of $n$ mobile entities, called robots, located and operating on a continuous circle, i.e., all robots are initially in distinct locations on a circle. The \textit{gathering} problem asks to design a distributed algorithm that allows the robots to assemble at a point on the circle. Robots are anonymous, identical, and homogeneous. Robots operate in a deterministic Look-Compute-Move cycle within the circular path. Robots agree on the clockwise direction. The robot's movement is rigid and they have limited visibility $\pi$, i.e., each robot can only see the points of the circle which is at an angular distance strictly less than $\pi$ from the robot.

Di Luna \textit{et al}. [DISC'2020] provided a deterministic gathering algorithm of oblivious and silent robots on a circle in semi-synchronous (\textsc{SSync}) scheduler. Buchin \textit{et al}. [IPDPS(W)'2021] showed that, under full visibility, $\mathcal{OBLOT}$ robot model with \textsc{SSync} scheduler is incomparable to $\mathcal{FSTA}$ robot (robots are silent but have finite persistent memory) model with asynchronous (\textsc{ASync}) scheduler. Under limited visibility, this comparison is still unanswered. So, this work extends the work of Di Luna \textit{et al}. [DISC'2020] under \textsc{ASync} scheduler for $\mathcal{FSTA}$ robot model. 

\keywords{Gathering, Asynchronous, Circle, Limited visibility, Robots, Finite memory.}
\end{abstract}

\section{Introduction}
In swarm robotics, robots achieving some tasks with minimum capabilities is the main focus of interest. In the last two decades, there are huge research interest in robots working on coordination problems. It is not always easy to use robots with strong capabilities in real-life applications, as making these robots is not at all cost-effective. If a swarm of robots with minimum capabilities can do the same task then it is effective to use swarm robots rather than using robots with many capabilities, as making the cost of these robots in the swarm is very much cheaper than making robots with many capabilities. Also, it is very easy to design a robot of a swarm because they have minimum capabilities. 

The gathering is a very vastly studied problem by researchers. This is one of the fundamental tasks of autonomous robots in the distributed system. Gathering means if there are $n$ number of robots initially positioned arbitrarily then within finite time all robots will meet at a point that is not fixed initially. When there are two robots then this task is called $\mathit{rendezvous}$. It is not always easy to meet at a point by very weak robots in the distributed system. So it is challenging to design a distributed algorithm to gather some robots with very inexpensive and with few capabilities. 

In the theoretical approach, robots are assumed to be autonomous, identical, and homogeneous. They have no global coordinate system. Depending on the capabilities there are generally four types of robot models. These models are $\mathcal{OBLOT}$, $\mathcal{FSTA}$, $\mathcal{FCOM}$ and $\mathcal{LUMI}$. In each of these models, robots are assumed to be autonomous (i.e the robots do not have any central control), identical (i.e the robots are physically indistinguishable), and homogeneous (i.e each robot runs the same algorithm). Furthermore in the $\mathcal{OBLOT}$ model, the robots are silent (i.e there is no means of communication between the robots) and oblivious (i.e the robots do not have any persistent memory to remember their previous state), in $\mathcal{FSTA}$ model the robots are silent but not oblivious, in $\mathcal{FCOM}$ model the robots are oblivious but not silent and in $\mathcal{LUMI}$ model robots are neither silent nor oblivious. The robots after getting activated operate in a \textsc{Look-Compute-Move} (LCM) cycle. In the \textsc{Look} phase a robot takes input from its surroundings and then with that input runs the algorithm in \textsc{Compute} phase to get a destination point as an output. The robot then goes to that destination point by moving in the \textsc{Move} phase. A scheduler controls the activation of the robots. There are mainly three types of schedulers considered in the literature. In a synchronous scheduler, time is divided into global rounds. In (\textsc{FSync}) scheduler each robot is activated in all rounds and executes the LCM cycle simultaneously. In a semi-synchronous scheduler (\textsc{SSync})  all robots may not get activated in each round. But the robots that are activated in the same round execute the LCM cycle simultaneously. Lastly in the asynchronous scheduler (\textsc{ASync}), there is no common notion of time, a robot can be activated at any time. There is no concept of rounds. So there is no assumption regarding synchronization.

In this work, we investigate the gathering of robots on a circle. The robots are on a perimeter of a circle and they can only move along the perimeter of that circle. Here the robots have limited visibility. If the robots have full visibility and if they can elect a unique leader then gathering of robots is not so hard even if in \textsc{ASync} scheduler. But with limited visibility, as unique leader selection is not easy, so the gathering is also not trivial. Here we investigate the gathering of robots in an asynchronous scheduler on a circle with limited visibility with $\mathcal{FSTA}$ robots.

\section{Related work}
In distributed computing gathering of mobile robots has been the focus of intensive investigation under various computational power and communication varieties. The gathering has been extensively researched in distributed computing. This problem has been thoroughly studied both in continuous and discrete domains. There are few surveys on this problem \cite{BhagatMM19,flochhini19,CieliebakFPS12}. In the continuous case, both the gathering and the rendezvous problems have been investigated in the context of swarms of autonomous mobile robots operating in one and two-dimensional spaces, requiring them to meet at or converge to the same point. Here we mainly focus on the gathering of robots when a robot has limited visibility. Flochhini \cite{Flocchini05} in their paper showed that gathering is done by robots in a plane under limited visibility in \textsc{ASync} scheduler, assuming that all robots agree on the orientation of both coordinate axes. In \cite{PoudelS17} authors provided an optimal algorithm for the gathering of robots on a plane under limited visibility. In another work on \cite{Souissi06} authors solved the gathering in \textsc{Ssync} scheduler with one axis agreement. In \cite{ando99} authors solved that without chirality and any agreement on local coordinate system robots gather (similarly converge) under \textsc{Fsync} (\textsc{SSync}) scheduler. After that in paper \cite{degener11} they showed that the time complexity depends on the number of swarms.

In the discrete case, the robots are dispersed in a network modeled as a graph and are required to gather at the same node and terminate. The main difficulty in solving the problem is symmetry. But when the network nodes are anonymous, the network is symmetric, the mobile agents are identical, and there is no means of communication, it is impossible to solve the problem in deterministic means. So the focus is to make the problems deterministically solvable with minimal assumptions. In the paper \cite{GuilbaultP11}, authors showed the gathering of asynchronous oblivious robots in a bipartite graph with a local vision which means a robot can see its immediate neighbor only. Also in \cite{'CzyzowiczKP12'} they provide an algorithm of \textit{rendezvous} problem in an arbitrary graph. Another interesting way of researching distributed systems is to investigate any distributed problem with faulty robots. In \cite{BouchardDD15} authors studied the gathering of robots in a network with byzantine robots. Also in \cite{'agmon-06',DefagoP0MPP16} authors provide a gathering algorithm with the crash and byzantine faults. In recent days researchers started to investigate dynamic graphs {\cite{Casteigts10}, that is graphs where the topological changes are not localized, the topology changes continuously and at unpredictable locations, and these changes are an integral part of the nature of the system. In paper \cite{LunaFPPSV17} they start the investigation of gathering in dynamic rings. In \cite{LunaFSVY20a,jczyzowicz2013} they studied the gathering of robots in polygon terrain. In \cite{GoswamiSGS22} they provide an optimal gathering of robots on a triangular grid. There is huge research work on the gathering of robots on grid networks \cite{DAngeloSKN16,BhagatCDM22}.

In our paper robots are on a continuous circle, and we will solve all robots gather at a point on the circle. There are some recent works on this model. In \cite{HuusK15,FeinermanKKR14} authors showed the \textit{rendezvous} of mobile agents with different speeds in a cycle. Since leader election (\cite{Dieudonn-09}) is an important task for solving many problems in distributed computing. In \cite{'FlocchiniKKSY19'} authors start the investigation of solving \textsc{Gather} and \textsc{elect} by the set of robots $R$ deployed in a continuous cycle $\mathcal{C}$, they primarily focus on \textsc{Elect}. But in their model robots have no visibility i.e they can not see at any distance. After that Di Luna \textit{et al.} \cite{'LunaUVY20'}, proves that with limited visibility $\pi$ gathering of anonymous and oblivious robots is possible in \textsc{SSync} scheduler but it is impossible to gather all robots when visibility is $\frac{\pi}{2}$ on continuous circle $\mathcal{C}$. In this paper, we proposed an algorithm for gathering with $n$ robots with finite time under \textsc{ASync} schedulers when visibility is $\pi$ on a circle in $\mathcal{FSTA}$ robot model. 

\section{Our contribution}
% In distributed systems gathering of robots is an interesting research area. There is huge research on gathering in a plane and a particular network. But gathering on a circle by robots is not studied by many others. 
Under full visibility, gathering on a circle by robots in an asynchronous scheduler is not so difficult when robots agree on a global sense of handedness. But excluding one point from visibility makes the gathering problem a bit difficult. Di Luna \textit{et al.} (\cite{'LunaUVY20'}) showed the gathering of oblivious robots on a circle under visibility $\pi$ in \textsc{SSync} scheduler. In this paper, we extend this work under the asynchronous scheduler. Here the robots are anonymous, identical, and silent. However, we only compromise the obliviousness of the robots to get the liberty of asynchronous scheduler. We assume each robot has finite persistent memory. 

In \cite{BFKSW21}, authors showed that under full visibility, semi-synchronous scheduler with oblivious, silent robots, abbreviated by $\mathcal{OBLOT}^S$ is incomparable with an asynchronous scheduler with silent finite memory robots, abbreviated as $\mathcal{FSTA}^A$. For the limited visibility model, this comparison is still not done. This result shows that bringing finite memory into the picture does not make the model equally powerful as $\mathcal{OBLOT}^S$. Having finite memory is not at all a big deal in comparison to maintaining a synchronous scheduler.

Further, however, we do not consider a general asynchronous scheduler. Precisely, we assumed that in the LCM cycle of a robot, it takes a nonzero time (non instantaneous) to finish its look and compute phase together. This assumption is fairly realistic in a practical scenario.  

\section{Model and Definitions}
\subsection{Robot Model}

 In the problem, we are considering $\mathcal{FSTA}$ robot model. The robots are anonymous, and identical, but not oblivious. Robots have finite persistent memory. Robots can not communicate with each other. Robots have weak multiplicity detection capability, i.e., robots can detect a multiplicity point, but can not determine the number of robots present at a multiplicity point. All robots are placed on a circle. The robots agree on a global sense of handedness. Robots operate in \textsc{Look-Compute-Move} cycle. In each cycle a robot takes a snapshot of the positions of the other robots according to its own local coordinate system (\textsc{Look}); based on this snapshot, it executes a deterministic algorithm to determine whether to stay put or to move to another point on the circle. (\textsc{Compute}); and based on the algorithm the robots either remain stationary or makes a move to an adjacent point (\textsc{Move}). We assume that robots are controlled by a fully asynchronous adversarial scheduler (\textsc{ASync}). The robots are activated independently and each robot executes its cycles independently. This implies the amount of time spent in \textsc{Look}, \textsc{Compute}, \textsc{Move}, and inactive states are finite but unbounded, unpredictable, and not the same for different robots. We assume that look and compute phase together in an LCM cycle of a robot is not instantaneous. The robots have no common notion of time. All robots move at the same speed and their movement is rigid. Here the initial configuration is asymmetric. Robots have limited visibility which means a robot can not see the entire circle. Let $a$ and $b$ be two points on circle $\mathcal{C}$, then the angular distance between $a$ and $b$ is the measure of the angle subtended at the center of $\mathcal{C}$ by the shorter arc with endpoints $a$ and $b$. A robot has visibility $\pi$, that it can see all the other robots which are with angular distance less than $\pi$. 
 \subsection{Definitions and preliminaries}
\begin{definition}[Configuration]
Let there be a circle given and a finite number of robots be placed on a circle. We call it a configuration of robots on the circle, or simply a configuration.
\end{definition}

\begin{definition}[Rotationally Symmetric Configuration]
For a configuration with no multiplicity point is said to be rotationally symmetric if there is a nontrivial rotation with respect to the center which leaves the configuration unchanged.
\end{definition}
 
\begin{definition}[Antipodal robot]
A robot $r$ is said to be an antipodal robot if there exists a robot $r'$ on the angular distance $\pi$ of the robot. In such a case, $r$ and $r'$ are said to be antipodal robots to each other.
\end{definition}
Note that a robot which is not antipodal is said to be non antipodal robot.
\begin{definition}
Let $r'$ and $r''$ be two robots in a configuration positioned at distinct positions. Then $cwAngle(r',r'')$ ($ccwAngle(r',r'')$) is the angular distance from $r'$ to $r''$ in clockwise (counter clockwise) direction.
\end{definition}

\begin{definition}[Angle sequence]
Let $r$ be a robot in a given configuration with no multiplicity point and let $r_1, r_2,\dots, r_n$ be the other robots on the circle in clockwise order. Then the angular sequence for robot $r$ is the sequence $$(cwAngle(r,r_1),cwAngle(r_1,r_2),cwAngle(r_2,r_3),\dots,cwAngle(r_n,r)).$$ We denote this sequence as $\mathcal{S}(r)$. We further denote the sub sequence $$(cwAngle(r,r_1),cwAngle(r_1,r_2),cwAngle(r_2,r_3),\dots,cwAngle(r_{i-1},r_i))$$ of $\mathcal{S}(r)$ as $\mathcal{S}(r,r_i)$. Further, we call $cwAngle(r,r_1)$ as the leading angle of $r$.

\end{definition}
Note that as the configuration is initially rotationally asymmetric, by results from the paper \cite{'LunaUVY20'} we can say that all the robots have distinct angle sequences.
\begin{definition}[Lexicographic Ordering]
Let $\tilde{a} = (a_1,\dots,a_n)$ and $\tilde{b}=$ $(b_1,\dots,b_n)$ be two finite sequences of reals of same length. Then $\tilde{a}$ is said to be lexicographically strictly smaller sequence if $a_1<b_1$ or there exists $1<k< n$ such that $a_i=b_i$ for all $i=1,2,\dots,k$ and $a_{k+1}<b_{k+1}$. $\tilde{a}$ is said to be lexicographically smaller sequence if either $\tilde{a}=\tilde{b}$ or $\tilde{a}$ is lexicographically strictly smaller sequence.
\end{definition}
\begin{definition}[True leader]
In a configuration, a robot with lexicographically smallest angular sequence is called a true leader.
\end{definition}
If the configuration is rotationally asymmetric and contains no multiplicity point, there exists exactly one robot which has strictly the smallest lexicographic angle sequence. Hence there is only one true leader for such a configuration.
Since a robot on the circle cannot see whether its antipodal position is occupied by a robot or not. So a robot can assume two things: 1)~the antipodal position is empty, let's call this configuration $C_0(r)$ 2)~the antipodal position is nonempty, let's call this configuration $C_1(r)$. So a robot $r$ can form two angular sequences. One considering $C_0(r)$ configuration and another considering $C_1(r)$. The next two definitions are from the viewpoint of a robot. If the true leader robot can confirm itself as the true leader, we call it \textit{Sure Leader}. If the true leader or some other robot has an ambiguity of being a true leader depending on the possibility of $C_0(r)$ or $C_1(r)$ configuration, then we call it $\textit{Confused Leader}$. There may be the following possibilities. 
\begin{itemize}
    \item Possibility-1: $C_0(r)$ configuration has rotational symmetry, so $C_1(r)$ is the only possible configuration.
    \item Possibility-2: $C_1(r)$ configuration has rotational symmetry, so $C_0(r)$ is the only possible configuration.
    \item Possibility-3: Both $C_0(r)$ and $C_1(r)$ has no rotational symmetry, so both $C_0(r)$ and $C_1(r)$ can be possible configurations.
\end{itemize}
\begin{definition}[Sure leader]
A robot $r$ in a rotationally asymmetric configuration with no multiplicity point is called Sure Leader if $r$ is the true leader in any possible configuration.
\end{definition}

Note that, the Sure leader is definitely the true leader of the configuration. Hence at any time if the configuration is asymmetric and contains no multiplicity point, there is at most one Sure leader. 

% \begin{lemma}
% There does not exist any configuration such that two sure leaders exist.
% \end{lemma}

\begin{definition}[Confused leader]
A robot $r$ in a rotationally asymmetric configuration with no multiplicity point is called a Confused Leader if both $C_0(r)$ and $C_1(r)$ are possible configurations and $r$ is a true leader in one configuration but not in another.
\end{definition}
\begin{definition}[Follower robot]
A robot in an asymmetric configuration with no multiplicity point is said to be a follower robot if it is neither a sure leader nor a confused leader.
\end{definition}
\begin{definition}[Expected leader]
A robot in an asymmetric configuration with no multiplicity point is said to be an expected leader if it is not a follower robot. That is, an expected leader is either a sure leader or a confused leader.
\end{definition}
Note that the above definitions are set in such a way that the sure leader (or, a confused leader or, a follower robot) can recognize itself as a sure leader (or, a confused leader or, a follower robot).  
\begin{definition}
For two robots $r$ and $r'$ situated at different positions on the circle such that $cwAngle(r,r')=\theta$ we define $[r,r']$ as the set of points $x$ on the circle such that  $0\le cwAngle(r,x) \le \theta$ and $(r,r')$ as the set of points $x$ on the circle such that  $0 < cwAngle(r,x) < \theta$.
\end{definition}
\begin{definition}
Let $r$ be a robot in a configuration, then another robot $r_1$ is said to be situated at the left of $r$ if $cwAngle(r,r_1)>\pi$ and said to be at right if $cwAngle(r,r_1)<\pi$.
\end{definition}
Note that, the true leader of the configuration can become a confused leader and a robot other than the true leader can also become a confused Leader. The next results lead us to find the maximum possible number of expected leaders.
\begin{proposition}\label{lemma1}
Let $(a_1,\dots,a_k)$ be the angular sequence of the true leader, say $r_0$, in a rotationally asymmetric configuration with no multiplicity point. Then there cannot be another robot $r'$ with the following properties.
\begin{enumerate}
    \item $r'$ is at left side to $r_0$,
    \item $\mathcal{S}(r',r_0)=(a_1,a_2,\dots,a_i)$,
    \item First $i$ angles of $\mathcal{S}(r_0)$ respectively are $a_1,a_2,\dots,a_{i-1}$ and $a_i$. 
\end{enumerate}
\end{proposition}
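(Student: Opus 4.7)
The plan is to assume such a robot $r'$ exists and derive a contradiction by running two carefully chosen lexicographic comparisons against $\mathcal{S}(r_0)$, one with $\mathcal{S}(r')$ and the other with the angular sequence of a robot sitting to the right of $r_0$.

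First I would fix notation. Label the robots clockwise from $r_0$ as $r_0, s_1, s_2, \ldots, s_{k-1}$, so that $a_t = cwAngle(s_{t-1}, s_t)$ with the cyclic convention $s_0 = s_k = r_0$. Since $r'$ lies to the left of $r_0$, we have $cwAngle(r',r_0) < \pi$; condition~2 rewrites this as $cwAngle(r',r_0) = a_1 + \cdots + a_i$. Matching the $i$ arcs from $r'$ clockwise to $r_0$ with the tail of $\mathcal{S}(r_0)$ identifies $r' = s_j$ for $j := k - i$ and forces the identity
\begin{equation*}
    a_{j+t} \;=\; a_t \qquad (t = 1, 2, \ldots, i).
\end{equation*}
Viewing $\mathcal{S}(r')$ as the cyclic shift $(a_{j+1}, \ldots, a_k, a_1, \ldots, a_j)$ of $\mathcal{S}(r_0)$ and substituting this identity yields $\mathcal{S}(r') = (a_1, \ldots, a_i, a_1, \ldots, a_j)$.

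Next I would consider the robot $s_i$ located $i$ steps clockwise from $r_0$, whose angular sequence is the cyclic shift $\mathcal{S}(s_i) = (a_{i+1}, \ldots, a_k, a_1, \ldots, a_i)$. The same identity $a_{j+t}=a_t$ shows that the last $i$ entries of $\mathcal{S}(s_i)$ already coincide with the last $i$ entries of $\mathcal{S}(r_0)$, so any lexicographic disagreement between $\mathcal{S}(s_i)$ and $\mathcal{S}(r_0)$ must lie among the first $j$ positions, where the comparison at position $t$ is between $a_t$ and $a_{i+t}$. Symmetrically, the first $i$ entries of $\mathcal{S}(r')$ agree with those of $\mathcal{S}(r_0)$, so any disagreement with $\mathcal{S}(r_0)$ must occur in positions $i+1, \ldots, i+j$, and at position $i+t$ the comparison is again between $a_{i+t}$ and $a_t$. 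The key structural observation is that both comparisons therefore hinge on the \emph{same} family of pairs $\{(a_t, a_{i+t}) : t = 1, \ldots, j\}$.

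To close the argument, I would use the fact that $r_0$ is the unique true leader (valid because the configuration is asymmetric with no multiplicity point), so that both $\mathcal{S}(r_0) < \mathcal{S}(s_i)$ and $\mathcal{S}(r_0) < \mathcal{S}(r')$ hold strictly. Letting $t^*$ denote the smallest $t \in \{1, \ldots, j\}$ with $a_t \ne a_{i+t}$, the first inequality forces $a_{t^*} < a_{i+t^*}$ while the second forces $a_{i+t^*} < a_{t^*}$, which is impossible. If no such $t^*$ exists, then $a_t = a_{i+t}$ for every $t \in \{1, \ldots, j\}$, which combined with the earlier identity gives $\mathcal{S}(r_0) = \mathcal{S}(r')$; but then the rotation carrying $r_0$ to $r'$ would preserve the configuration, violating rotational asymmetry. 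The main obstacle is the careful bookkeeping of cyclic-shift indices, so that one recognizes that both strict inequalities are governed by the same minimum disagreeing index and therefore contradict each other outright.
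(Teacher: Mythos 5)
Your proof is correct and follows essentially the same route as the paper: both compare $\mathcal{S}(r_0)$ against the cyclic shifts corresponding to $r'$ and to the robot $i$ steps clockwise, deduce that the angle sequence is invariant under the shift by $i$, and contradict rotational asymmetry. The only difference is organizational — you resolve everything at the single minimal disagreement index $t^*$, whereas the paper establishes $a_{i+j}=a_j$ entry by entry via an inductive repetition of the same two-sided comparison.
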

\begin{proof}
 %\paragraph*{Proof of Proposition~\ref{lemma1}}
We prove this result by contradiction. If possible let there be such a robot $r'$. Then note that $\mathcal{S}(r_0)$ is $\tilde{a}=(a_1,\dots,a_i,a_{i+1},\dots,a_{k-i},a_1,\dots,a_i)$ and $\mathcal{S}(r')$ is $\tilde{a}_1=(a_1,\dots,a_i,a_1,\dots,a_i,a_{i+1},\dots,a_{k-i})$. Since $r_0$ has the strictly smallest angular sequence, so $a_1$ is the smallest angle in the configuration and also $a_{i+1}\le a_1$, which leads to $a_{i+1}=a_1$. Next, we show that $a_2=a_{i+2}$. Since the $\tilde{a}$ is strictly the smallest angular sequence so $a_{i+2}\le a_2$. If $a_{i+2}<a_2$, then the angular sequence $(a_{i+1},\dots,a_{k-i},a_1,\dots,a_i,a_1,\dots,a_i)$ is smaller than $\tilde{a}$, which is a contradiction. Hence $a_2=a_{i+2}$. Therefore by a similar argument, we can show that $a_{i+j}=a_j$ for $j=3,\dots,i$. Now if $2i= k$ then we see that $\tilde{a}=\tilde{a}_1$, which contradicts the fact that the configuration is rotationally asymmetric. Otherwise, proceeding similarly we can show that $a_{2i+j}=a_j$, for $j=1,2,\dots,i$. Repeating the same argument we can show that $a_{pi+j}=a_j$, for $j=1,2,\dots,i$. Since there are finitely many angles, so after finite number of steps we must end up having that $k=ti$ where $t\ge 2$ and $\tilde{a}=\tilde{a}_1=(a_1,\dots,a_i,a_1,\dots,a_i,\dots,a_1,\dots,a_i)$, which is again a contradiction.
 \end{proof}
% \begin{proof}
% We prove this result by contradiction. If possible Let there be such a robot $r'$. Then note that $\mathcal{S}(r_0)$ is $\tilde{a}=(a_1,\dots,a_i,a_{i+1},\dots,a_{k-i},a_1,\dots,a_i)$ and $\mathcal{S}(r')$ is $\tilde{a}_1=(a_1,\dots,a_i,a_1,\dots,a_i,a_{i+1},\dots,a_{k-i})$. Since $r_0$ has the strictly smallest angular sequence, so $a_1$ is the smallest angle in the configuration and also $a_{i+1}\le a_1$, which leads to $a_{i+1}=a_1$. Next, we show that $a_2=a_{i+2}$. Since the $\tilde{a}$ is strictly the smallest angular sequence so $a_{i+2}\le a_2$. If $a_{i+2}<a_2$, then the angular sequence $(a_{i+1},\dots,a_{k-i},a_1,\dots,a_i,a_1,\dots,a_i)$ is smaller than $\tilde{a}$, which is a contradiction. Hence $a_2=a_{i+2}$. Therefore by a similar argument, we can show that $a_{i+j}=a_j$ for $j=3,\dots,i$. Now if $2i= k$ then we see that $\tilde{a}=\tilde{a}_1$, which contradicts the fact that the configuration is rotationally asymmetric. Otherwise, proceeding similarly we can show that $a_{2i+j}=a_j$, for $j=1,2,\dots,i$. Repeating the same argument we can show that $a_{pi+j}=a_j$, for $j=1,2,\dots,i$. Since there are finitely many angles, so after finite number of steps we must end up having that $k=ti$ where $t\ge 2$ and $\tilde{a}=\tilde{a}_1=(a_1,\dots,a_i,a_1,\dots,a_i,\dots,a_1,\dots,a_i)$, which is again a contradiction.
% \end{proof}

Next, we state a simple observation in the following Proposition~\ref{lemma00}.
\begin{proposition}\label{lemma00}
Suppose there is a rotationally asymmetric configuration with no multiplicity point with the true leader, $r_0$ (say), then on including a robot, say $r$, on the circle at an empty point without bringing any rotational symmetry, the true leader of the new configuration must be in $[r_0,r]$.    
\end{proposition}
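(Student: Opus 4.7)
The plan is to show that in the new configuration (after inserting $r$), the angle sequence of $r_0$ is lexicographically strictly smaller than that of every robot lying in the open arc $(r, r_0)$. Since the new true leader is by definition the robot whose angle sequence is lex-smallest, this at once forces the new true leader to belong to $[r_0, r]$.

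First I would fix notation. Let the robots other than $r_0$ in the old configuration be $r_1, r_2, \dots, r_{k-1}$ in clockwise order, so that $\mathcal{S}(r_0) = (a_1, \dots, a_k)$ with $a_i = cwAngle(r_{i-1}, r_i)$ and $r_k = r_0$. Suppose $r$ is inserted in the $j$-th gap, between $r_{j-1}$ and $r_j$, splitting $a_j$ into $b_1 = cwAngle(r_{j-1}, r)$ and $b_2 = cwAngle(r, r_j)$ with $b_1, b_2 > 0$. Then the robots in the open arc $(r, r_0)$ are precisely $r_j, r_{j+1}, \dots, r_{k-1}$, and it suffices to beat each of them in lex order.

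Next I would record the two relevant angle sequences in the new configuration. The angle sequence of $r_0$ is obtained from $(a_1, \dots, a_k)$ by replacing the entry $a_j$ by the pair $b_1, b_2$, so the split lands at positions $j$ and $j+1$. A direct clockwise walk from $r_m$, for $j \le m \le k-1$, shows that the new angle sequence of $r_m$ is obtained from its old cyclic shift $\tilde{a}^{(m)} = (a_{m+1}, \dots, a_k, a_1, \dots, a_m)$ by splitting the entry that corresponds to $a_j$ into $b_1, b_2$; crucially, this split lands at positions $k - m + j$ and $k - m + j + 1$, both of which are $\ge j + 1$.

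The comparison then uses that $r_0$ was the old true leader, so $\tilde{a} = (a_1, \dots, a_k)$ is strictly lex-smaller than $\tilde{a}^{(m)}$ for every $m \ge 1$. Let $p$ be the first index at which these two old sequences disagree. If $p \le j - 1$, the same disagreement carries over verbatim to the new sequences, since on positions $1, \dots, j - 1$ neither split has yet occurred. Otherwise the new sequences of $r_0$ and $r_m$ agree on positions $1, \dots, j - 1$, and at position $j$ we have $b_1 < a_j \le \tilde{a}^{(m)}[j]$, where $a_j \le \tilde{a}^{(m)}[j]$ must hold (else $\tilde{a}^{(m)} <_{lex} \tilde{a}$, contradicting that $r_0$ was the old true leader). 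Either way, in the new configuration $\mathcal{S}(r_0)$ is strictly smaller than $\mathcal{S}(r_m)$, so no robot of $(r, r_0)$ can be the new true leader.

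The main obstacle is the bookkeeping required to pinpoint the split position inside each cyclic shift and to argue that the old lex comparison survives the insertion; once that is handled, the strict lex inequality holds uniformly over all $r_m \in (r, r_0)$ and the proposition follows.
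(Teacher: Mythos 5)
Your argument is correct and complete: the reduction to beating every robot of $(r,r_0)$ in lexicographic order, the observation that the split of $a_j$ lands at positions $j,j+1$ in $r_0$'s new sequence but at position $k-m+j\ge j+1$ in $r_m$'s, and the two-case comparison via the first old disagreement index $p$ all hold up (including the key inequality $b_1<a_j\le\tilde{a}^{(m)}[j]$ when $p\ge j$). The paper states this proposition as a ``simple observation'' and provides no proof at all, so there is nothing to compare against; your bookkeeping supplies the missing justification.
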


For a confused leader $r$, there may be two possibilities. The first one is when $r$ is a true leader in $C_0(r)$ configuration but not in $C_1(r)$. The second one is when $r$ is a true leader in $C_1(r)$ configuration but not in $C_0(r)$. We show that the second possibility can not occur. We formally state the result in the following Proposition.

\begin{proposition}\label{lemma30}
If a robot $r$ is a confused leader in asymmetric configuration with no multiplicity point, $r$ is the true leader in $C_0(r)$ configuration but $r$ is not the true leader in $C_1(r)$ configuration.
\end{proposition}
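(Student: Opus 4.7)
The plan is to argue by contradiction. Assume $r$ is the true leader of $C_1(r)$ but not of $C_0(r)$, and let $\ell\ne r$ denote the true leader of $C_0(r)$. Applying Proposition~\ref{lemma00} to the addition of $r^*$ at the empty antipode of $r$ in $C_0(r)$ (both $C_0(r)$ and $C_1(r)$ are rotationally asymmetric, being the two possible configurations for a confused leader), the true leader $r$ of the new configuration $C_1(r)$ must lie on the clockwise arc $[\ell,r^*]$; since $r^*$ is antipodal to $r$, this forces $cwAngle(\ell,r)<\pi$, so $\ell$ lies strictly to the left of $r$. Enumerate the robots of $C_0(r)$ clockwise as $r=\rho_0,\rho_1,\ldots,\rho_{n-1}$ with $\alpha_k=cwAngle(\rho_{k-1},\rho_k)$, and let $i$ be the unique index with $\sum_{k=1}^{i}\alpha_k<\pi<\sum_{k=1}^{i+1}\alpha_k$, so that $r^*$ sits between $\rho_i$ and $\rho_{i+1}$ in $C_1(r)$, splitting $\alpha_{i+1}$ into $\beta_1=\pi-\sum_{k=1}^{i}\alpha_k$ and $\beta_2=\alpha_{i+1}-\beta_1$. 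Then $\ell=\rho_m$ for some $m\ge i+1$, and the four angular sequences $\mathcal{S}(r),\mathcal{S}(\ell)$ in $C_0(r)$ and in $C_1(r)$ are explicit cyclic shifts of $(\alpha_1,\ldots,\alpha_n)$, optionally with $\alpha_{i+1}$ replaced by the pair $(\beta_1,\beta_2)$.

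Let $k^*$ be the smallest index where $\mathcal{S}(\ell)$ and $\mathcal{S}(r)$ differ in $C_0(r)$; by hypothesis the $\ell$-entry there is strictly smaller. If $k^*\le i$, then both $r^*$-insertions in $C_1(r)$ occur at positions strictly greater than $k^*$ (at positions $i+1$ in $\mathcal{S}(r)$ and $n-m+i+1$ in $\mathcal{S}(\ell)$), so the inequality at position $k^*$ survives unchanged in $C_1(r)$, giving $\mathcal{S}(\ell)<_{lex}\mathcal{S}(r)$ and contradicting that $r$ is the true leader of $C_1(r)$. If $k^*>i$, the agreement through position $i$ forces the equalities $\alpha_{m+j}=\alpha_j$ for $j=1,\ldots,\min(k^*-1,n-m)$. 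When $n-m\le i$, this yields $\mathcal{S}(\ell,r)=(\alpha_{m+1},\ldots,\alpha_n)=(\alpha_1,\ldots,\alpha_{n-m})$, which coincides with the length-$(n-m)$ prefix of $\mathcal{S}(r)$ in $C_1(r)$ (no split occurs in that prefix, as $n-m\le i$); Proposition~\ref{lemma1} applied in $C_1(r)$ with true leader $r$ and left-robot $\ell$ then gives the desired contradiction.

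The main obstacle is the remaining sub-case $k^*>i$ with $n-m>i$: here the split entries $\beta_1,\beta_2$ sit inside the length-$(n-m)$ prefix of $\mathcal{S}(r)$ in $C_1(r)$, so the straight piece of angles $\mathcal{S}(\ell,r)$ cannot match that prefix directly and Proposition~\ref{lemma1} with $\ell$ as left-robot does not trigger immediately. I would handle this by comparing $\mathcal{S}(r)$ and $\mathcal{S}(\ell)$ in $C_1(r)$ entry by entry past position $i$, exploiting $\beta_1+\beta_2=\alpha_{i+1}$, $\beta_1=\pi-\sum_{k=1}^{i}\alpha_k$, and the visibility inequality $\sum_{k=1}^{i+1}\alpha_k>\pi$, together with the constraints that $r$ also beat $\rho_i$ (whose leading angle in $C_1(r)$ is $\beta_1$) and $r^*$ (whose leading angle is $\beta_2$). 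At each entry either $\mathcal{S}(\ell)$ becomes strictly smaller than $\mathcal{S}(r)$ (contradicting $r$'s leadership), or a chain of equalities $\alpha_{m+i+1}=\beta_1$, $\alpha_{m+i+2}=\beta_2$, $\alpha_{m+i+3}=\alpha_{i+2},\ldots$ is successively forced; this chain either eventually triggers Proposition~\ref{lemma1} with a different left-robot whose forward arc matches the full prefix of $\mathcal{S}(r)$ in $C_1(r)$, or propagates a periodicity of the $\alpha_k$'s that forces rotational symmetry of $C_1(r)$, contradicting the asymmetry hypothesis. In every sub-case a contradiction is reached.
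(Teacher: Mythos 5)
The paper states Proposition~\ref{lemma30} without any proof, so there is no in-paper argument to compare against; your attempt has to stand on its own. Your framing and the first two sub-cases are sound: the use of Proposition~\ref{lemma00} to force $cwAngle(\ell,r)<\pi$, the observation that when $k^*\le i$ both insertions of $r^*$ occur past position $k^*$ (at $i+1$ and $n-m+i+1$) so the strict inequality survives into $C_1(r)$, and the invocation of Proposition~\ref{lemma1} when $k^*>i$ and $n-m\le i$ are all correct.

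The genuine gap is the sub-case $k^*>i$ with $n-m>i$, which you only sketch. Two issues: first, the comparison is never actually carried out; second, the dichotomy you propose (``either $\mathcal{S}(\ell)$ becomes strictly smaller, or a chain of equalities is forced'') omits the third possibility that $\mathcal{S}(r)$ becomes strictly smaller at the compared entry --- that outcome is \emph{consistent} with $r$ leading $C_1(r)$ and produces no contradiction, and ruling it out is precisely the content of this sub-case. Fortunately the case closes at the very first compared position, using exactly the ingredients you list. Since $i+1\le n-m$, the angle $\alpha_{m+i+1}$ lies inside the clockwise arc from $\ell$ to $r$, so $\sum_{j=1}^{i+1}\alpha_{m+j}\le cwAngle(\ell,r)<\pi$; combined with the forced equalities $\alpha_{m+j}=\alpha_j$ for $j\le i$ this gives $\alpha_{m+i+1}<\pi-\sum_{j=1}^{i}\alpha_j=\beta_1$. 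In $C_1(r)$ the two sequences agree on positions $1,\dots,i$, and at position $i+1$ the entry of $\mathcal{S}(r)$ is $\beta_1$ while that of $\mathcal{S}(\ell)$ is $\alpha_{m+i+1}$ (its own split sits later, at position $n-m+i+1$), so $\mathcal{S}(\ell)<_{lex}\mathcal{S}(r)$ in $C_1(r)$, contradicting $r$'s leadership. No equality chain, alternative left-robot, or periodicity argument is needed; with this short computation in place of your sketch, the proof is complete.
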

From Proposition~\ref{lemma30} one can observe that if the true leader of an asymmetric configuration with no multiplicity point is a confused leader then its antipodal position must be empty. And also if a robot, which is not the true leader of the configuration, becomes a confused leader then its antipodal position must be non-empty. We record these observations in the following Corollaries.

\begin{corollary}\label{Cor1}
In a rotationally asymmetric configuration with no multiplicity point if the true leader of the configuration is a confused leader then its antipodal position must be empty.
\end{corollary}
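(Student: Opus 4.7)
The plan is to derive this corollary almost immediately from Proposition~\ref{lemma30} together with the definitions of \emph{true leader} and \emph{confused leader}. I would argue by contradiction: suppose $r_0$ is the true leader of the actual configuration and is simultaneously a confused leader, but its antipodal point is occupied. Since the true leader is defined relative to the actual positions of robots on the circle, an occupied antipodal point means the actual configuration is precisely $C_1(r_0)$ from $r_0$'s local viewpoint.

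Next I would invoke Proposition~\ref{lemma30}: because $r_0$ is a confused leader, we know $r_0$ is the true leader in $C_0(r_0)$ but \emph{not} in $C_1(r_0)$. Combined with the previous step, this says that $r_0$ fails to be the true leader in the actual configuration, contradicting our starting assumption. Hence the antipodal position of $r_0$ cannot be occupied, and the corollary follows.

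I expect no real obstacle here: the corollary is essentially a restatement of Proposition~\ref{lemma30} for the special case in which the confused leader happens to coincide with the true leader. The only thing worth being careful about is the bookkeeping that ties the ``locally-imagined'' configurations $C_0(r_0)$ and $C_1(r_0)$ to the actual configuration on the circle, so that ``$r_0$ is the true leader of the actual configuration'' can be matched with ``$r_0$ is the true leader of $C_i(r_0)$'' for the appropriate $i\in\{0,1\}$. Once that correspondence is stated, the contradiction is immediate and the proof is a one-paragraph argument.
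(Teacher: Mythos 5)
Your proposal is correct and matches the paper's reasoning: the paper derives this corollary as an immediate observation from Proposition~\ref{lemma30}, exactly as you do, by identifying the actual configuration with $C_1(r_0)$ when the antipodal point is occupied and noting that Proposition~\ref{lemma30} then denies $r_0$ the true-leader role there. No difference in approach worth noting.
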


\begin{corollary}\label{Cor2}
In a rotationally asymmetric configuration with no multiplicity point if a robot other than the true leader becomes a confused leader then its antipodal position must be non-empty.
\end{corollary}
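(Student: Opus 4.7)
The plan is to derive Corollary~\ref{Cor2} directly from Proposition~\ref{lemma30}, using the way $C_0(r)$ and $C_1(r)$ are defined relative to the actual configuration.

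First, I would set up the dichotomy: for any robot $r$, the actual configuration on the circle coincides with $C_0(r)$ if the point antipodal to $r$ is empty, and coincides with $C_1(r)$ if that point is occupied. So ``the antipodal position of $r$ is empty'' and ``the actual configuration equals $C_0(r)$'' are the same statement. This identification is the only conceptual move needed.

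Next, I would apply Proposition~\ref{lemma30} to the hypothesized confused leader $r$: it tells us that $r$ is the true leader in $C_0(r)$ but not in $C_1(r)$. I then argue by contradiction. Suppose the antipodal position of $r$ were empty. By the identification above, the actual configuration would then be $C_0(r)$, and Proposition~\ref{lemma30} would force $r$ to be the true leader of the actual configuration. This contradicts the hypothesis that $r$ is \emph{not} the true leader of the actual configuration. Hence the antipodal position of $r$ must be non-empty.

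There is no real obstacle here; the statement is a one-line consequence of Proposition~\ref{lemma30} once one observes that the ``unknown'' antipodal slot is precisely what distinguishes $C_0(r)$ from $C_1(r)$ and that the actual configuration must equal one of these two. The only care I would take is to state the equivalence between ``antipodal empty'' and ``actual configuration $=C_0(r)$'' explicitly, so that the contrapositive of Proposition~\ref{lemma30} is unambiguous.
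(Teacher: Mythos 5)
Your proof is correct and matches the paper's intent: the paper offers no explicit argument, merely noting that the corollary "can be observed" from Proposition~\ref{lemma30}, and the observation it has in mind is exactly your identification of "antipodal position empty" with "actual configuration $=C_0(r)$" followed by the contradiction with $r$ not being the true leader. Making that identification explicit is a useful clarification, but the route is the same.
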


\begin{proposition}\label{lemma44}
Let $\mathcal{C}$ be a rotationally asymmetric configuration with no multiplicity point and $L$ be the true leader of the configuration, then another expected leader $r$ of $\mathcal{C}$ must satisfy $cwAngle(L,r)\ge\pi$.
\end{proposition}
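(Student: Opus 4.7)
The plan is to reduce the statement directly to Proposition~\ref{lemma00}, using the structural facts about confused leaders that have already been established. The key insight is that an expected leader distinct from the true leader must be a confused leader whose antipodal position is occupied, and deleting that antipodal robot recovers a configuration in which the confused leader genuinely is the true leader; inserting it back converts the scene into exactly the setup of Proposition~\ref{lemma00}.

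First I would argue that $r$ must be a confused leader. An expected leader is by definition either a sure leader or a confused leader, and the remark immediately following the definition of Sure Leader states that a sure leader necessarily coincides with the true leader of the configuration; so $r \neq L$ rules out the sure case. Applying Proposition~\ref{lemma30} then yields that $r$ is the true leader of $C_0(r)$ but not of $C_1(r)$, while Corollary~\ref{Cor2} yields that the antipodal position $r^*$ of $r$ is actually occupied in $\mathcal{C}$. Consequently the real configuration $\mathcal{C}$ coincides with $C_1(r)$, and $\mathcal{C}$ is obtained from $C_0(r)$ by inserting a single robot at the otherwise empty point $r^*$.

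Next I would verify the hypotheses of Proposition~\ref{lemma00} for this insertion. The configuration $C_0(r)$ is rotationally asymmetric with no multiplicity point and has $r$ as its true leader: asymmetry is forced because, per the Possibility-3 discussion that follows the definition of Confused Leader, a confused leader exists only when both $C_0(r)$ and $C_1(r)$ are asymmetric; absence of multiplicity points is inherited from $\mathcal{C}$; and the true leader status of $r$ in $C_0(r)$ comes from Proposition~\ref{lemma30}. The insertion takes place at an empty point by construction, and the resulting configuration $\mathcal{C}$ is rotationally asymmetric by the hypothesis of the proposition, so no symmetry is introduced by the insertion.

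Proposition~\ref{lemma00} then yields that the true leader $L$ of the enlarged configuration $\mathcal{C}$ lies in the arc $[r, r^*]$. Since $cwAngle(r, r^*) = \pi$ by definition of antipodality, this gives $cwAngle(r, L) \le \pi$, and therefore $cwAngle(L, r) = 2\pi - cwAngle(r, L) \ge \pi$, as required (with equality precisely when $L = r^*$). The main point that needs genuine care is checking that the two hypotheses of Proposition~\ref{lemma00} really hold in this setup, in particular the no-new-symmetry clause; the angular arithmetic at the end is completely routine once $L \in [r, r^*]$ has been established.
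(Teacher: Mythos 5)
Your proof is correct and follows essentially the same route as the paper's: the paper also reduces to Proposition~\ref{lemma30} (to get that $r$ is the true leader of $C_0(r)$ while $L$ is the true leader of $C_1(r)=\mathcal{C}$) and then invokes Proposition~\ref{lemma00}, merely phrasing the final step as a contradiction with $cwAngle(L,r)<\pi$ rather than deriving $L\in[r,r^*]$ directly. Your version is simply more explicit about verifying the hypotheses of Proposition~\ref{lemma00} and about the concluding angular arithmetic.
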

\begin{proof}
 %\paragraph*{Proof of Proposition~\ref{lemma44}}
If possible let there be another expected leader $r$ such that $cwAngle(L,r)<\pi$. Since $r$ is not the leader of the configuration, so $r$ must be a confused leader. Hence from Proposition~\ref{lemma30}
we have that $r$ is true leader in $C_0(r)$ configuration and $L$ is the true leader of $C_1(r)$ configuration. This contradicts the Proposition~\ref{lemma00}.

 \end{proof}

\begin{proposition}\label{lemma3}
For any given rotationally asymmetric configuration with no multiplicity point, there can be at most one confused leader other than the true leader.
\end{proposition}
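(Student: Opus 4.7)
The plan is to argue by contradiction. Suppose two distinct confused leaders $r_1,r_2$, both different from the true leader $L$, exist in $\mathcal{C}$. Applying Proposition~\ref{lemma44} to $\mathcal{C}$ gives $cwAngle(L,r_i)\ge\pi$ for $i=1,2$; by relabelling, assume $cwAngle(L,r_1)<cwAngle(L,r_2)$, so that $cwAngle(r_1,r_2)\in(0,\pi)$. By Corollary~\ref{Cor2} together with Proposition~\ref{lemma30}, the antipodal robots $L_1,L_2$ of $r_1,r_2$ are present in $\mathcal{C}$, are distinct, and each $r_i$ is the unique true leader of $C_0(r_i)=\mathcal{C}\setminus\{L_i\}$.

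Next I would translate the two confused-leader hypotheses into equations between the angular gap sequences of $\mathcal{C}$. Label the robots $L=P_1,P_2,\dots,P_n$ in cw order with gaps $a_k=cwAngle(P_k,P_{k+1})$, and write $r_i=P_{j_i}$, $L_i=P_{s_i}$, so $s_1<s_2\le j_1<j_2$ and $\sum_{k=s_i}^{j_i-1}a_k=\pi$. Removing $L_i$ merges the two adjacent gaps $a_{s_i-1},a_{s_i}$ in every robot's sequence. The merge acts at position $s_i-1$ of $\mathcal{S}_\mathcal{C}(L)$ and at the strictly later position $n-j_i+s_i$ of $\mathcal{S}_\mathcal{C}(r_i)$. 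For the lex-order of $\mathcal{S}_{C_0(r_i)}(r_i)$ versus $\mathcal{S}_{C_0(r_i)}(L)$ to flip (i.e., for $r_i$ to become true leader after the merge while $L$ was true leader before), the merge in $L$'s sequence must hit an early enough position to make $L$'s post-merge prefix strictly larger. This forces the prefix agreement
\[
a_k=a_{j_i+k-1}\qquad\text{for }k=1,\dots,s_i-1,
\]
that is, the first $s_i-1$ gaps cw from $L$ coincide with the first $s_i-1$ gaps cw from $r_i$.

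Stacking the $i=1$ and $i=2$ agreements and using $s_1<s_2$, one extracts in particular
\[
a_{j_1+k}=a_{j_2+k}\qquad\text{for }k=0,\dots,s_1-2,
\]
which is a partial periodicity of $(a_1,\dots,a_n)$ with period $d=j_2-j_1$. The contradiction will then come from extending this partial periodicity to a full periodicity, which forces a nontrivial rotational symmetry of $\mathcal{C}$ and contradicts the standing asymmetry hypothesis. To extend, I would iterate the merge-comparison argument one more time and invoke Proposition~\ref{lemma1} applied to $L$: since $r_1$ lies on the left of $L$, Proposition~\ref{lemma1} forbids any long prefix of $\mathcal{S}_\mathcal{C}(r_1,L)$ from coinciding with the corresponding prefix of $\mathcal{S}_\mathcal{C}(L)$, which is exactly what would close off every way for the partial periodicity to terminate before wrapping all the way around.

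The main obstacle is precisely this final bootstrap step: lifting the $(s_1-1)$-long local match produced by the two confused-leader conditions into a full rotational symmetry of the circle. The propagation has to be tracked across the merges at both $L_1$ and $L_2$ without premature mismatches, and Proposition~\ref{lemma1} must be applied at the right place to rule out the only boundary scenario in which the periodicity could fail to close up.
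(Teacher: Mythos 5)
There is a genuine gap: your proposal is a plan rather than a proof. The step you yourself flag as ``the main obstacle'' --- promoting the finite window of gap agreements $a_{j_1+k}=a_{j_2+k}$, $k=0,\dots,s_1-2$, into a full periodicity of $(a_1,\dots,a_n)$ and hence a rotational symmetry --- is exactly where the difficulty of the statement lives, and nothing in the proposal shows how to carry it out. Worse, the window you extract has length governed by $s_1-1$, which is $0$ precisely when $cwAngle(L,r_1)=\pi$ (then $L_1=L$, $s_1=1$), so in that case your ``partial periodicity'' is vacuous and the argument gives nothing; this is a case the paper has to treat separately (its Case-I). The appeal to Proposition~\ref{lemma1} is also not a fit as stated: that proposition excludes one very specific pattern (a robot to the left of the true leader whose gap sequence up to the leader duplicates an initial prefix of the leader's own sequence); it is not an engine for propagating an arbitrary local match of gaps all the way around the circle, and you would have to build that propagation by hand. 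There is additionally a small quantitative slip: the lex-flip argument only forces agreement of the first $s_i-2$ entries, not $s_i-1$, because the flip can already occur at the merge position itself, where $a_{s_i-1}$ is replaced by $a_{s_i-1}+a_{s_i}$.

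For contrast, the paper does not go through gap-sequence periodicity at all. It splits on $cwAngle(L,r_1)=\pi$ versus $cwAngle(L,r_1)>\pi$, and on how long a prefix of $\mathcal{S}(r_1)$ agrees with $\mathcal{S}(L)$, and then argues about visibility: since the blind spot $r_2'$ of $r_2$ cannot occupy the relevant positions, $r_2$ sees enough of $\mathcal{S}(L)$ to be forced to copy its prefix, after which $r_1$ sees enough of $\mathcal{S}(r_2)$ to disqualify itself as a confused leader, or Proposition~\ref{lemma00} disqualifies $L$ as true leader. If you want to keep your algebraic route you must supply the full propagation-to-symmetry argument and handle the $s_1=1$ degenerate case; as written the proof is incomplete.
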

\begin{proof}
 % \paragraph*{Proof of Proposition~\ref{lemma3}}
Let $\mathcal{C}$ be the given configuration. If possible let $r_1$ and $r_2$ be two confused leaders other than the true leader, say $L$. From Proposition~\ref{lemma44} First we have that $cwAngle(L,r_i)\ge\pi$, for $i=1,2$. Without loss of generality we assume $cwAngle(L,r_1)<cwAngle(L,r_2)$. Since $r_i$s are confused leaders so from Proposition~\ref{lemma30} their antipodal position is non-empty. Let for each $i$, $r_i'$ be the antipodal robot of $r_i$. There are two exhaustive cases. First one is $cwAngle(L,r_1)=\pi$ and second one is $cwAngle(L,r_1)>\pi$.

\textbf{Case-I}: $cwAngle(L,r_1)=\pi$

\textit{Case-IA:} In this case let the first angle of the angle sequences of $r_1$ and $L$ be different. Let the first angle in $\mathcal{S}(L)$ and $\mathcal{S}(r_1)$ are $\theta$ and $\theta_1$ respectively, then $\theta<\theta_1$. Since $r_2'$ cannot be the clockwise neighbor of $L$, so $r_2$ can see the leading angle of $L$. Therefore in order to become a confused leader, the leading angle of $r_2$ has to be $\theta$. Now there may be two cases. Firstly either the clockwise neighbor of $r_2$ is $L$ or not. If the clockwise neighbor of $r_2$ is $L$ then from Proposition~\ref{lemma00} $L$ does not remain the true leader of the configuration, which is a contradiction. Secondly, if the clockwise neighbor of $r_2$ is not $L$ then $r_1$ can see the leading angle of $r_2$, that is $\theta$ which is smaller than the leading angle of $r_1$. This gives $\mathcal{C}\smallsetminus L$ cannot have $r_1$ as a true leader. This contradicts the fact that $r_1$ is a confused leader.

\textit{Case-IB:} Let the first $t$ angles of $\mathcal{S}(L)$ and $\mathcal{S}(r_1)$ are same and $(t+1)^{th}$ angles are different. Let first $t$ clockwise neighbors of $L$ in clockwise order are $x_1,x_2,\dots,x_t$ and first $t$ clockwise neighbors of $r_1$ in clockwise order are $x_1',x_2',\dots,x_t'$. First, we show that $r_2$ is none of $x_i'$s. If possible let $r_2=x_i'$ where $i<t$. Then $r_2$ can see the first $i-1$ angles of $\mathcal{S}(L)$ and $\mathcal{S}(r_1)$. Since $i^{th}$ clockwise neighbor of $r_2$ is at the left of $r_2'$, $r_2$ can see its first $i$ angles of $\mathcal{S}(r_2)$ in original configuration. Now first we observe that the first $i-1$ angles of $r_2$ and $L$ are the same. If not then either $r_2$ would not be a confused leader or $L$ would not be leader of the configuration. Now we see that $i^{th}$ angle of $\mathcal{S}(L)$ and $\mathcal{S}(r_2)$ is also same. If possible let $i^{th}$ angle of $\mathcal{S}(L)$ and $\mathcal{S}(r_2)$ are $\theta_i$ and $\alpha_i$ respectively, and $\alpha_i>\theta_i$ (Note that the case $\alpha_i<\theta_i$ gets excluded from the fact that $L$ is the true leader of the configuration). Since $i^{th}$ angle of $r_1$ is also $\theta_i$ which is visible by $r_2$, so then $r_2$ would not be confused leader. Hence first $i$ angles of $\mathcal{S}(r_1)$ and $\mathcal{S}(r_2)$ are same. Therefore from Proposition~\ref{lemma00}, the $r_2$ cannot be a confused leader. Hence for each $i<t$, $r_2\ne x_i'$. Now it is easy to observe that $r_2\ne x_t$, because otherwise, the leading angle of $r_2$ is strictly greater than the leading angle of $L$. which makes that $r_2$ is not a confused leader.

Hence for each $1\le i\le t$, $r_2\ne x_i'$. Thus note that $r_2$ can see the first $t+1$ angles of $\mathcal{S}(L)$ and in order to remain a confused leader first $t+1$ angles of $\mathcal{S}(r_2)$ should match with it. Now there are two cases, either $(t+1)^{th}$ clockwise neighbor of $r_2$ is $L$ or not. If not then $r_1$ can see the first $t+1$ angles of $\mathcal{S}(r_2)$. And $t+1^{th}$ angle of $\mathcal{S}(r_2)$ is smaller than same of $\mathcal{S}(r_1)$. Therefore $r_1$ would not be a confused leader. In other case if $(t+1)^{th}$ clockwise neighbor of $r_2$ is $L$ then from Proposition~\ref{lemma00}, $L$ does not remain leader of the configuration. Hence for case-I, we end up having a contradiction if there is more than one confused leader other than a true leader.

\textbf{Case-II}: $cwAngle(L,r_1)>\pi$

\textit{Case-IIA:} In this case let the first angle of the angle sequences of $r_1$ and $L$ be different. Let the first angle in $\mathcal{S}(L)$ and $\mathcal{S}(r_1)$ are $\theta$ and $\theta_1$ respectively, then $\theta<\theta_1$. Since in this case $r_2'$ cannot be the clockwise neighbor of $L$, so $r_2$ can see the leading angle of $L$. Therefore in order to become a confused leader, the leading angle of $r_2$ has to be $\theta$. Now $r_1$ can see the leading angle of $r_2$, that is $\theta$ which is smaller than the leading angle of $r_1$. This gives $\mathcal{C}\smallsetminus r_1'$ cannot have $r_1$ as true leader. This contradicts the fact that $r_1$ is confused leader.

\textit{Case-IIB:} Let the first $t$ angles of $\mathcal{S}(L)$ and $\mathcal{S}(r_1)$ are same and $(t+1)^{th}$ angles are different. Let first $t$ clockwise neighbors of $L$ in clockwise order are $x_1,x_2,\dots,x_t$ and $t$ clockwise neighbors of $r_1$ in clockwise order are $x_1',x_2',\dots,x_t'$. Since $r_1$ is not antipodal of $L$, so $r_2'$ cannot be the clockwise first neighbor of $L$. Borrowing the argument from Case-I we can conclude that $r_2$ is none of $x_i'$s. In other cases, $r_2$ can see the first $t+1$ angles of $\mathcal{S}(L)$ and in order to be a confused leader first $t+1$ angles of $\mathcal{S}(r_2)$ should coincide with the same with $\mathcal{S}(L)$. Now we see that $r_1$ can see the first $t+1$ angles of $\mathcal{S}(r_2)$. If not then, since $r_1'$ is at right to $L$, so $L$ will be at most $t^{th}$ neighbor of $r_1$. This implies $L$ is not the true leader of the configuration from Proposition~\ref{lemma00}. So $r_1$ can see the first $t+1$ angles of $\mathcal{S}(r_2)$. Since $(t+1)^{th}$ angle of $\mathcal{S}(r_2)$ is smaller than the same of $\mathcal{S}(r_1)$ then $r_1$ does not remain a confused leader of the configuration. Which is a contradiction. 

Hence it is proved that there cannot be more than one confused leader other than the true leader of the configuration.

 \end{proof}

Let $\mathcal{C}$ be a rotationally asymmetric configuration with no multiplicity point. Then $\mathcal{C}$ will have a true leader and from above Proposition~\ref{lemma00} there can be at most one more confused leader. Hence we can have the following four exhaustive cases for $\mathcal{C}$.
\begin{enumerate}
    \item $\mathcal{C}$ has exactly one expected leader and that is a sure leader.
    \item $\mathcal{C}$ has exactly one expected leader and that is a confused leader.
    \item $\mathcal{C}$ has exactly two expected leaders and both are confused leaders.
    \item $\mathcal{C}$ has exactly two expected leaders. One of them is a sure leader and another one is a confused leader.
\end{enumerate}
The Fig.~\ref{leader0} and Fig.~\ref{leader1} give the existence of all four above cases. For the third case, we observe two properties in the following Propositions.
\begin{figure}[ht] 
    \centering 
     \includegraphics[width=0.6\linewidth]{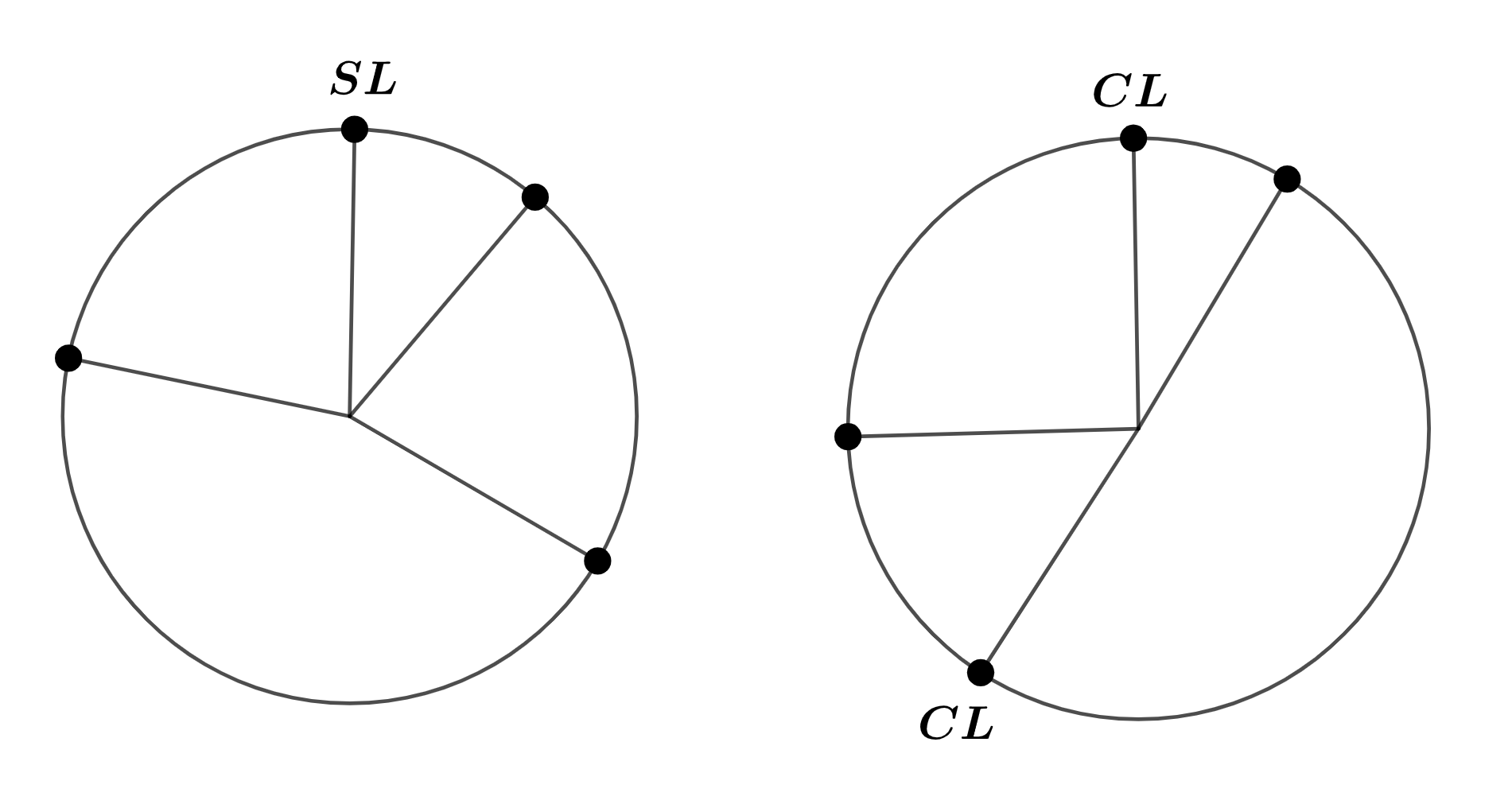}
     \caption{In the left figure only one sure leader (SL) and in right figure two confused leader (CL) in the configuration.}
     \label{leader0}
    \end{figure}
\begin{figure}[ht] 
    \centering 
     \includegraphics[width=0.6\linewidth]{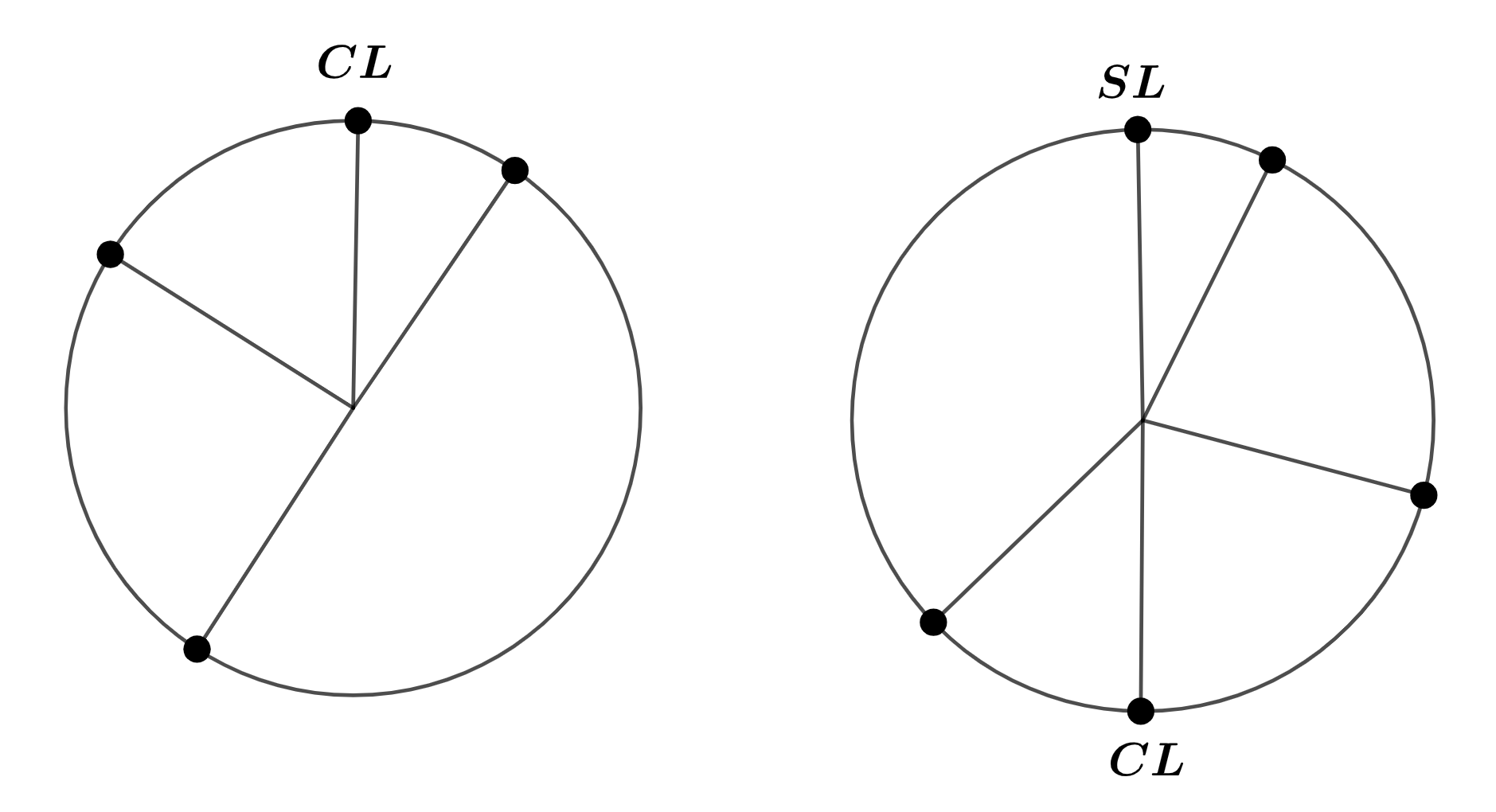}
     \caption{In left figure only one confused leader (CL) and in the right figure one sure leader (SL) and one confused leader (CL).}
     \label{leader1}
    \end{figure}

\begin{proposition}\label{lemma2}
If for a rotationally asymmetric configuration with no multiplicity point if there are two confused leaders then they can not be antipodal of each other.
\end{proposition}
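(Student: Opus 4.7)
The plan is to reduce the statement to the two corollaries that immediately precede it. First I would invoke Proposition~\ref{lemma3} to observe that any pair of confused leaders in a rotationally asymmetric configuration with no multiplicity point must consist of exactly one true leader $L$ and one non-true-leader confused leader, which I will call $r$. Second, I would apply Corollary~\ref{Cor1}, which asserts that whenever the true leader is itself a confused leader, its antipodal position must be empty. Finally, assuming for contradiction that $r$ and $L$ are antipodal would place $r$ precisely at $L$'s antipodal position, directly contradicting Corollary~\ref{Cor1}.

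I do not foresee a serious obstacle here: the real content was established earlier in Proposition~\ref{lemma30} (through Corollaries~\ref{Cor1} and~\ref{Cor2}) together with Proposition~\ref{lemma3}. The present statement is essentially a two-line consequence of these results. The only bookkeeping required is the correct identification of which of the two confused leaders must be the true leader, and this is forced because, by Proposition~\ref{lemma3}, at most one confused leader can be distinct from the true leader. As a consistency check, Corollary~\ref{Cor2} predicts that $r$'s antipodal position is non-empty, which is perfectly compatible with $L$ sitting there; so the contradiction arises purely from $L$'s side via Corollary~\ref{Cor1}, not from $r$'s side.

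If the argument had to be made without invoking Corollary~\ref{Cor1} directly, one would retrace its proof by contradiction: were $L$ antipodal to $r$, then in the configuration $C_1(L)$ obtained by marking $L$'s antipodal position as occupied we would actually see a real robot $r$ there, and one would derive, exactly as in Proposition~\ref{lemma30}, that $L$ fails to remain the true leader in $C_0(L)$ versus $C_1(L)$ in a way inconsistent with $L$ being a confused leader. But since Corollary~\ref{Cor1} is already on the table, the direct two-step argument above is the cleanest route and is what I would write up.
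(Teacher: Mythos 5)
Your proposal is correct and follows essentially the same route as the paper: identify that one of the two confused leaders must be the true leader (the paper asserts this directly, you justify it via Proposition~\ref{lemma3}) and then apply Corollary~\ref{Cor1} to conclude that its antipodal position is empty, so the other confused leader cannot sit there. The extra consistency check against Corollary~\ref{Cor2} is a nice touch but not needed.
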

\begin{proof}
 %\paragraph*{Proof of Proposition~\ref{lemma2}}
Let $p$ and $q$ be two confused leaders of a rotational asymmetric configuration with no multiplicity point. Now one of $p$ and $q$ must be the true leader of the configuration. Without loss of generality let $p$ be the leader of the configuration. Then from the Corollary~\ref{Cor1} antipodal position of $p$ must be empty. Hence another confused leader $q$ cannot be at the antipodal position of $p$.
 \end{proof}

 \begin{proposition}\label{lemma3oo}
If for a rotationally asymmetric configuration with no multiplicity point if there are two confused leaders then their clockwise neighbors cannot be antipodal to each other.
 \end{proposition}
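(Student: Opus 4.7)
My plan is to argue by contradiction, leaning on the structural facts already established about two confused leaders. By Proposition~\ref{lemma3}, if the configuration $\mathcal{C}$ has two confused leaders $p$ and $q$, exactly one must be the true leader; without loss of generality take $p$ to be the true leader. Write $p_1$ and $q_1$ for the clockwise neighbors and set $\alpha = cwAngle(p,p_1)$, $\gamma = cwAngle(q,q_1)$, $\beta = cwAngle(p,q)$. Combining Proposition~\ref{lemma44} with Proposition~\ref{lemma2} yields $\beta>\pi$, and Corollary~\ref{Cor1} says that the antipodal position of $p$ is empty, so in particular $\alpha\ne\pi$. Also $q_1$, being the first clockwise neighbor of $q$, lies strictly in the open clockwise arc from $q$ to $p$; hence $\beta+\gamma\in(\beta,2\pi)$.

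Suppose toward a contradiction that $p_1$ and $q_1$ are antipodal, so $cwAngle(p_1,q_1)=\pi$. Measuring $cwAngle(p,q_1)$ along the two distinct clockwise routes $p\to p_1\to q_1$ and $p\to q\to q_1$ produces the identity
\[
\beta+\gamma\equiv \alpha+\pi\pmod{2\pi}.
\]
I would then split on the size of $\alpha$. If $\alpha<\pi$, then $\alpha+\pi\in(\pi,2\pi)$ lies in the same fundamental interval as $\beta+\gamma$, so the congruence collapses to $\beta+\gamma=\alpha+\pi$, giving $\gamma=\alpha+\pi-\beta<\alpha$ (since $\beta>\pi$). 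But the true leader's angular sequence starts with the smallest leading angle of the configuration, so $\gamma<\alpha$ makes $\mathcal{S}(q)$ lexicographically strictly smaller than $\mathcal{S}(p)$, contradicting that $p$ is the true leader. If instead $\alpha>\pi$, then $\alpha+\pi\equiv\alpha-\pi\pmod{2\pi}$ with $\alpha-\pi\in(0,\pi)$, while $\beta+\gamma>\beta>\pi$, so the congruence has no solution with $\beta+\gamma$ in the required interval; this subcase is already vacuous.

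Either way we obtain a contradiction, and so $p_1$ and $q_1$ cannot be antipodal. The one potentially fussy ingredient I anticipate is the modular arithmetic, since the true leader's leading angle is not a priori bounded by $\pi$ (the clockwise neighbor may sit more than halfway around the circle when the antipode is empty); but the case split on $\alpha$ is tailored exactly to isolate the useful sub-case in which the equation $\gamma=\alpha+\pi-\beta$ actually holds and forces $\gamma<\alpha$.
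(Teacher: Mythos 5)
Your proof is correct and follows essentially the same route as the paper's: both use Propositions~\ref{lemma44} and~\ref{lemma2} to get $cwAngle(L,r)>\pi$ and then argue that antipodality of the two clockwise neighbors forces the non-leader confused leader to have a strictly smaller leading angle than the true leader, a contradiction (the paper delegates this geometric step to Figure~\ref{Fig:3}, which you make explicit with the modular computation). The only nit is that $q_1$ could a priori be $p$ itself rather than lying strictly in the open arc from $q$ to $p$, but that case forces $\alpha=\pi$ and is already excluded by your observation from Corollary~\ref{Cor1}.
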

 \begin{proof}
 %\paragraph*{Proof of proposition~\ref{lemma3oo}}
If possible let the clockwise neighbors of two confused leaders be antipodal to each other. Let $L$ be the true leader of the configuration and $r$ be another confused leader. Then from Proposition~\ref{lemma44} and Proposition~\ref{lemma2} we have $cwAngle(L,r)>\pi$. Then the leading angle of $L$ becomes strictly greater than the leading angle of $r$ (Figure~\ref{Fig:3}). This contradicts the fact that $L$ is the true leader of the configuration. 
 \end{proof}
 \begin{figure}[ht] 
     \centering      \includegraphics[width=0.3\linewidth]{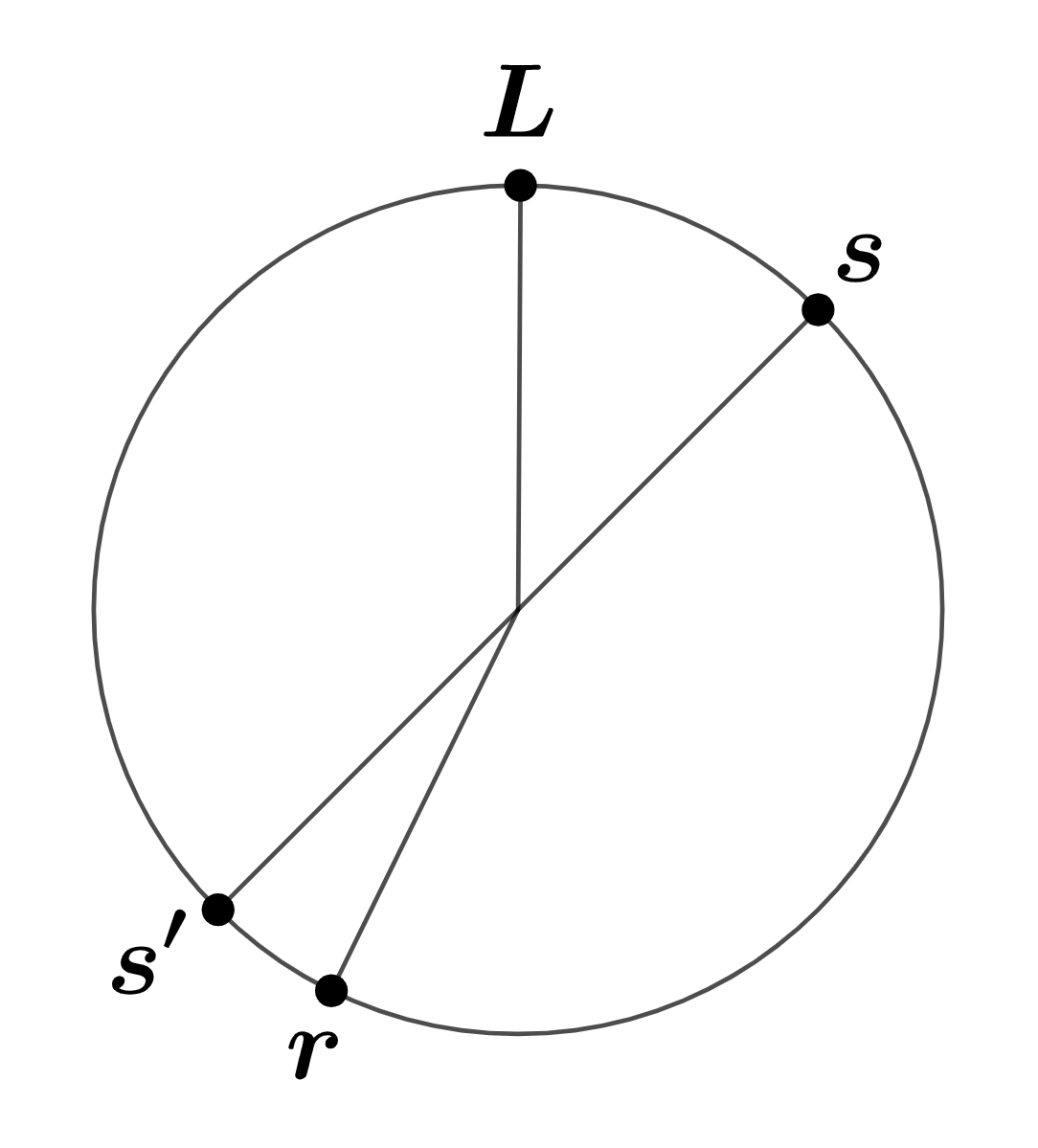}
     \caption{An image related to Proposition~\ref{lemma3oo}}
     \label{Fig:3}
    \end{figure}

  \section{Proposed Algorithm}
In this section, we provide the proposed algorithm that works for asynchronous robots with finite memory. The initial configuration is rotationally asymmetric and the robots are at distinct points on the circle. Each robot has $\pi$ visibility, which means, a robot cannot see its antipodal position. Each robot can move at a uniform speed on the arc of the circle. The moving speed of each circle is same. The proposed algorithm is provided in the Algorithm~\ref{algo:gathering}.

% If the input got in the Look phase is a configuration with no multiplicity point then the robot first determines whether it is a sure leader or confused leader or follower. If a robot determines itself as a sure leader then it moves to its clockwise first neighbor. If a robot determines itself as a confused leader then either it decides to move clockwise or decides not to move. A follower robot does not move. As here visibility is $\pi$, so a robot can not see its antipodal position. So here our goal for the gathering of all robots is to make a multiplicity point first. By our algorithm, we try to create at least one multiplicity point. And if two multiplicity points are formed then they are not antipodal to each other then we show that if one multiplicity point or two non antipodal multiplicity points are formed then finally all robots gather at a point on the circle. 
% Now for a confused leader $r$ there can be both of the possible configurations $C_0(r)$ and $C_1(r)$. So considering both the configurations $r$ can determine which other robots can possibly become another expected leader. The set of such expected leaders guessed by $r$ must include another expected leader of the original configuration. Hence we proved the following Lemma.
\paragraph*{Discussion of the Algorithm:} 
 Robots are initially dispersed on distinct points of a circle and the initial configuration is rotationally asymmetric. Robots have limited visibility. A robot can not see its antipodal position.
 As the initial configuration is rotationally asymmetric so each robot has distinct angle sequences. On getting activated a robot $r$ first checks whether it is the only robot in its visibility. If yes, then it moves $\pi/2$ angle clockwise. Otherwise, it looks for a multiplicity point. If there is any multiplicity point in the visible configuration then there are two possibilities. Possibility number one, $r$ itself is located at a multiplicity point, and possibility number two, is not located at a multiplicity point. If $r$ is not located at a multiplicity point, and one of the multiplicity points is a clockwise or anticlockwise neighbor of $r$, then $r$ moves to the closer multiplicity point. Suppose $r$ is located on a multiplicity point. For such a case, if there is another multiplicity point other than where $r$ is located, and that multiplicity point is at a clockwise distance less than $\pi$ then $r$ moves to that multiplicity point. Next, if there is no multiplicity point in the visible configuration, then it decides whether it is an expected leader or a follower robot. If $r$ is a follower robot then, it does nothing. If $r$ is a sure leader and its state is \texttt{off}, then it moves to its first clockwise neighbor. If $r$ is a confused leader then it checks its state. If its state is \texttt{terminate} then it does nothing. If its state is \texttt{off} then there are several cases.
 Case-I: If the first clockwise neighbor of $r$ is safe, then $r$ moves to its neighbor's position.
 Case-II: If the first clockwise neighbor of $r$ is not safe and $C_0(r)$ configuration has another confused leader other than $r$, then $r$ does nothing.
 Case-III: If neither of Case-I or Case-II holds then $r$ changes its state to \texttt{moveHalf} and moves $\theta/2$ angular distance clockwise.
 
 Next, suppose on getting activated, $r$ is at the state \texttt{moveHalf}. If the first clockwise neighbor of $r$, say $r'$, is not antipodal then $r$ changes its state to \texttt{terminate} and does not move. Otherwise, let $s$ be the robot at the antipodal position of $r'$ and the leading angle of $r$ is $\theta/2$. Then if there is a robot visible to $r$ in the arc $[s-\frac{\theta}{2}R,s+\frac{\theta}{2}R)$ then $r$ changes its state to terminate and move $\theta/2$ angular distance counterclockwise. If $r'$ is antipodal and there is no robot visible to $r$ in the arc $[s-\frac{\theta}{2}R,s+\frac{\theta}{2}R)$, then $r$ changes the state to \texttt{moveMore} and moves $\theta/4$ angular distance clockwise.
 
 Next, suppose on getting activated, $r$ is at the state \texttt{moveMore}. If the first clockwise neighbor of $r$, say $r'$, is not antipodal then $r$ changes its state to \texttt{terminate} and does not move. Otherwise, let $s$ be the robot at the antipodal position of $r'$ and the leading angle of $r$ is $\theta/4$. Then if there is a robot visible to $r$ in the arc $[s-\frac{\theta}{4}R,s+\frac{\theta}{4}R)$ then $r$ changes its state to terminate and move $3\theta/4$ angular distance counterclockwise. If $r'$ is antipodal and there is no robot visible to $r$ in the arc $[s-\frac{3\theta}{4}R,s+\frac{\theta}{4}R)$, then $r$ changes the state to \texttt{off} and moves at the position of its first clockwise neighbor.

Note that, according to the definition of sure leader (confused leader, follower robot), a robot can identify itself whether it is a sure leader (confused leader, follower robot) or not.

\begin{algorithm}
 %\DontPrintSemicolon % Some LaTeX compilers require you to use \dontprintsemicolon instead
 \footnotesize
The algorithm is executed by a generic robot $r$ with initial state \texttt{off}\;
\KwIn{The set of points occupied by robots visible to $r$} 
 \KwOut{Destination point for robot $r$}
 \eIf{there is a robot visible}
 {
    \uIf{there is no multiplicity point}
    {
        \uIf{if the robot $r$ has state \texttt{off}}
        {
            \uIf{the robot $r$ is the sure leader}
            {
                move to it's clockwise neighbour's position\;
            }
            \ElseIf{the robot $r$ is a confused leader}
            {
                \uIf{the clockwise first neighbor of $r$ is safe}
                {
                    move to the position of its clockwise first neighbor
                } 
                \ElseIf{$C_0(r)$ configuration does not have another confused leader other than $r$}
                {
                    let leading angle of $r$ be $\theta$\;
                    change the state to \texttt{moveHalf} and move $\frac{\theta}{2}$ angular distance clockwise\; 

                }
            }
        }
        
        \uElseIf{the robot $r$ has the state \texttt{moveHalf}}        
        {
            let the leading angle of $r$ be $\frac{\theta}{2}$\;
            let $s$ be the antipodal position of $r$\;
            \uIf{clockwise neighbor of $r$ is non antipodal}
            {
                change the state to \texttt{terminate}\;
            }
            \uElseIf{there is no robot in the arc $[s-\frac{\theta}{2}R,s+\frac{\theta}{2}R)$}
            {
                change the state to \texttt{moveMore}\;
                move $\frac{\theta}{4}$ angular distance clockwise\;  
            }
            \ElseIf{there is a robot in the arc $[s-\frac{\theta}{2}R,s+\frac{\theta}{2}R)$}
            {
                change the state to \texttt{terminate}\;
                move $\frac{\theta}{2}$ angular distance counterclockwise\;       
            }
            
        }
        \ElseIf{the robot $r$ has the state \texttt{moveMore}}  
        {
            let the leading angle of $r$ be $\frac{\theta}{4}$\;
            let $s$ be the antipodal position of $r$\;
            \uIf{clockwise neighbor of $r$ is non antipodal}
            {
                change the state to \texttt{terminate}\;
            }
            \uElseIf{there is no robot in the arc $[s-\frac{3\theta}{4}R,s+\frac{\theta}{4}R)$}
            {
                change the state to \texttt{off}\;
                move to the position of its clockwise first neighbor\; 
            }
            \ElseIf{there is a robot in the arc $[s-\frac{3\theta}{4}R,s+\frac{\theta}{4}R)$}
            {
                change the state to \texttt{terminate}\;
                move $\frac{3\theta}{4}$ angular distance counterclockwise\;       
            }
            
        }
                   
    }
    \uElseIf{there is a multiplicity point but the robot $r$ is not at any multiplicity point}
    {
        \If{its clockwise or counter-clockwise neighbor is a multiplicity point}
        {
            Move to the closer multiplicity point
        }
    }
    \ElseIf{Only visible position is a multiplicity point}
    {
        \If{clockwise angular distance from the multiplicity point is $<\pi/2$}
        {
            Move to the multiplicity point
        }
    }
    
 }
 {
    Move $\frac{\pi}{2}$ distance in clockwise direction
 }
\caption{Gathering algorithm for visibility $\pi$ }
\label{algo:gathering}
\end{algorithm}
% \begin{lemma}\label{conf-check}
%  In a rotationally asymmetric configuration with no multiplicity point, a confused leader can always identify another robot that can be another expected leader.
%  \end{lemma}

Before formally presenting the proposed algorithm, we give the definition of a \textit{safe} clockwise neighbor.

\begin{definition}[Safe neighbor]
    Suppose $r$ is a confused expected leader and $s$ is the first clockwise neighbor of $r$. The robot $s$ is said to be a safe neighbor of $r$ if the first clockwise neighbor of the true leader of $C_1(r)$ configuration is not antipodal to $s$.
\end{definition}

% When an expected leader which is confused sees that its first clockwise neighbor's antipodal position is not another expected leader's first clockwise neighbor then that first clockwise neighbor is said to be safe neighbor.  
 
 Next, we formally present the algorithm given in Algorithm~\ref{algo:gathering}.
 
 Next, we categorise an initial configuration, rotationally asymmetric configuration with no multiplicity points in the following:
 \begin{itemize}
 \item \textit{Configuration-A:} Only the expected leader is the true leader of the configuration. If the expected leader is a confused leader then its clockwise first neighbor is safe.
 \item \textit{Configuration-B:} There are two expected leaders in the configuration.
 \begin{itemize}
     \item \textit{Configuration-BI:} when the confused leader which is not the true leader finds its clockwise first neighbor safe.
     \item \textit{Configuration-BII:} when the confused leader which is not the true leader finds its clockwise first neighbor unsafe.
 \end{itemize}
 \item \textit{Configuration-C:} One expected leader which is a confused leader sees that its clockwise first neighbor is not safe.
 % and the leader needs to move to the middle point of the distance between it and its clockwise first neighbor according to Algorithm~\ref{algo:gathering}.
 \end{itemize}

 \section{Correctness}
We will prove that if all robots are initially placed on a rotationally asymmetric configuration with no multiplicity point then on finite time execution of Algorithm~\ref{algo:gathering} gathering of all robots eventually will occur at a point and no longer move in the asynchronous scheduler. First, we show that from the initial configuration after finite execution of Algorithm~\ref{algo:gathering} at least one and at most two multiplicity points will be created by robots. Then from a configuration with one or two multiplicity point the robots eventually gather at one of the multiplicity points and do not move further.

  \begin{lemma}\label{onesure}
  If the initial configuration is type configuration-A, then after finite time execution of Algorithm~\ref{algo:gathering} at least one multiplicity point will form.
 \end{lemma}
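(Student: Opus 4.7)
The plan is to show that the unique expected leader $L$ of a Configuration-A eventually merges with its clockwise first neighbor $r_1$, creating a multiplicity at the position $p_1$ of $r_1$. By definition of Configuration-A there is exactly one expected leader, so every other robot is a follower; in the absence of any multiplicity point, Algorithm~\ref{algo:gathering} instructs a follower to do nothing, so the configuration is frozen until $L$ itself acts. By fairness, $L$ is activated in finite time; being in state \texttt{off}, Algorithm~\ref{algo:gathering} sends it to $p_1$---directly if $L$ is a sure leader, and via the ``safe clockwise first neighbor'' branch if $L$ is a confused leader, since Configuration-A guarantees this safety. Because movement is rigid, $L$ reaches $p_1$ within a finite time interval $[0,T]$ provided no other robot moves in the meantime.

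The main step, and the main obstacle, is to verify this proviso in the asynchronous setting: even though followers may activate while $L$ is mid-transit, the instantaneous configurations they observe must still classify them as followers. I would argue this as follows. As $L$ slides clockwise toward $p_1$, its leading angle strictly decreases while every later entry of $\mathcal{S}(L)$ is unchanged, so $\mathcal{S}(L)$ only becomes lexicographically smaller; hence $L$ remains the strictly unique true leader throughout the motion. For any other robot $r$, in the initial snapshot $r$ was not an expected leader, so $L$ already dominated $\mathcal{S}(r)$ lexicographically in both $C_0(r)$ and $C_1(r)$ (using Proposition~\ref{lemma44} to locate $r$ relative to $L$). Since the only entries of $\mathcal{S}(r)$ that change during the motion are those adjacent to $L$'s slot, while $\mathcal{S}(L)$ only shrinks, this domination is preserved at every intermediate time; hence $r$ remains a follower throughout and does not move.

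Therefore $L$ is the sole robot in motion during $[0,T]$, and on arrival at $p_1$ it coincides with the stationary $r_1$, producing a multiplicity point. The trickiest sub-case I expect is $r = r_1$ itself, whose last two angle-sequence entries both deform as $L$ approaches (one shrinks, the other grows); ruling out that $r_1$ becomes a confused leader in some intermediate $C_0(r_1)$ or $C_1(r_1)$ will require invoking Propositions~\ref{lemma00} and \ref{lemma44} together with a careful accounting of how $\mathcal{S}(r_1)$ deforms, and this case analysis is where the bulk of the technical work will lie.
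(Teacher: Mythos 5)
Your skeleton---the unique expected leader $L$ walks to its clockwise first neighbour $r_1$ while every other robot, being a follower, stays put, so a multiplicity forms at $r_1$'s position---is the natural one, and your monotonicity observation is sound as far as it goes: the growing entry $cwAngle(\cdot,L)$ (from $L$'s counterclockwise neighbour) precedes the shrinking entry $cwAngle(L,r_1)$ in every other robot's angle sequence, so $\mathcal{S}(L)$ only decreases lexicographically while every $\mathcal{S}(r)$, $r\neq L$, only increases; hence $L$ remains the strict true leader and no rotational symmetry can arise in transit.

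The gap is the step ``this domination is preserved at every intermediate time; hence $r$ remains a follower.'' Being a follower is not the same as not being the true leader of the actual configuration: $r$ becomes an expected leader as soon as it is the true leader of the hypothetical configuration $C_0(r)$, which omits whatever occupies $r$'s antipodal point. Two problems follow. First, the robot that dominated $r$ inside $C_0(r)$ at time $0$ need not be $L$ at all, and its sequence also deforms during the transit; two lexicographically increasing sequences can swap order when their first differing entry is precisely the growing angle adjacent to $L$, so ``domination is preserved'' does not follow from monotonicity alone. Second, as $L$ sweeps the arc toward $p_1$ it may pass through the exact antipode of some robot $r$, at which instant $L$ disappears from $C_0(r)$ entirely and $r$ may be the true leader of what remains; an adversarial asynchronous scheduler can place $r$'s \textsc{Look} at exactly that instant. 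Proposition~\ref{lemma3} only bounds the number of confused leaders by one at each instant---it does not forbid a new one from appearing mid-transit. You would need either to prove that no far-side robot (those with $cwAngle(L_t,r)\ge\pi$, the only candidates by Proposition~\ref{lemma44}) ever becomes the true leader of its own $C_0(r)$ during the sweep, or, closer to the spirit of the algorithm, to argue that even if such a robot wakes and acts, its action is harmless for this lemma: it either jumps onto its own safe clockwise neighbour (which itself creates a multiplicity) or enters the \texttt{moveHalf} routine, while $r_1$---which by Proposition~\ref{lemma44} can never be an expected leader, since $cwAngle(L_t,r_1)<\pi$ throughout---never moves, so $L$'s rigid move still terminates on top of $r_1$. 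Relatedly, you flag $r=r_1$ as the hardest sub-case, but it is in fact the easy one for exactly this reason; the delicate robots are the ones on the far semicircle, which your argument does not engage.
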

%  \begin{proof}
% See the Appendix.
%  \end{proof}

  \begin{lemma}\label{oneslonecl}
 If the initial configuration is type configuration-B, then after finite time execution of Algorithm~\ref{algo:gathering} at least one multiplicity point will form.
 \end{lemma}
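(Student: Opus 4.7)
The plan is to split the analysis along the two sub-configurations BI and BII and, within each, trace the behavior of the two expected leaders $L$ (the true leader) and $r$ (the second confused leader) under any asynchronous activation schedule, using the fact that every follower robot remains stationary throughout the relevant window.

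\textbf{Configuration-BI.} Here $r$'s clockwise first neighbor is safe. If $L$ is a sure leader, it moves directly to its clockwise first neighbor; if $L$ is itself a confused leader, then combining Propositions~\ref{lemma2},~\ref{lemma44} and~\ref{lemma3oo} with the safe-neighbor hypothesis for $r$ one verifies that $L$'s own clockwise first neighbor is likewise safe, and $L$ also enters Case-I of the confused-leader branch. Hence both expected leaders, once activated in state \texttt{off}, start travelling toward their respective clockwise first neighbors, which are follower robots (by Proposition~\ref{lemma44}, no other expected leader sits within clockwise distance $<\pi$ of them). The Move phase of whichever leader completes first lands it on a stationary point, creating the required multiplicity.

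\textbf{Configuration-BII.} By definition, $r$'s clockwise first neighbor $s$ is antipodal to the clockwise first neighbor of $L$. The algorithm then places $r$ either in Case-II (stand still, if $C_0(r)$ admits a second confused leader apart from $r$) or in Case-III (switch to \texttt{moveHalf} and traverse $\theta/2$ clockwise). In Case-II we argue that $L$ is not simultaneously blocked: if $L$ is sure it moves directly, and if $L$ is also confused, Propositions~\ref{lemma2} and~\ref{lemma3oo} rule out both $L$ and $r$ having their clockwise first neighbors unsafe in the same blocking pattern, so $L$ eventually performs a Case-I or Case-III move and a multiplicity is produced at its destination. In Case-III, the state chain \texttt{off}$\to$\texttt{moveHalf}$\to$\texttt{moveMore}$\to$\texttt{off}/\texttt{terminate} is engineered so that, at each subsequent activation, $r$ either (i) detects that its new clockwise first neighbor is non-antipodal and terminates, (ii) detects a robot in the arc $[s-\tfrac{\theta}{2}R,\,s+\tfrac{\theta}{2}R)$ (respectively $[s-\tfrac{3\theta}{4}R,\,s+\tfrac{\theta}{4}R)$) and rolls back exactly onto a known follower position, or (iii) advances by a further $\theta/4$ to which the symmetric argument then applies. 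The specific magnitudes $\theta/2$, $\theta/4$, $3\theta/4$ of the clockwise/counter-clockwise corrections are chosen precisely to make the terminal location coincide with a stationary robot.

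\textbf{Main obstacle.} The technical heart of the argument is the asynchronous interleaving in Case-III of BII: while $r$ is mid-trajectory through \texttt{moveHalf} or \texttt{moveMore}, the leader $L$ may itself be in the middle of a Move phase, and one must certify that regardless of the interleaving either (a) $L$'s Move completes first and produces a multiplicity while $r$ is still in flight, in which case $r$'s next activation correctly reads the non-antipodality of its clockwise neighbor and terminates, or (b) the antipodal-arc check performed by $r$ in state \texttt{moveHalf} or \texttt{moveMore} correctly identifies the intermediate position of $L$ and triggers the counter-clockwise correction of the right length to sit $r$ on top of an existing robot. Verifying that this dichotomy is exhaustive, and that no corrective displacement lands on an empty arc, is the delicate bookkeeping step; the earlier propositions about the possible locations of expected leaders are the main tools used to rule out the problematic interleavings.
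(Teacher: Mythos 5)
Your case decomposition (BI versus BII, tracking the true leader $L$ and the second confused leader $r$) is a reasonable skeleton, but several steps you rely on are either unjustified or fail outright. First, the premise you invoke at the outset --- that ``every follower robot remains stationary throughout the relevant window'' --- is precisely what an asynchronous correctness proof must establish, not assume: a follower classifies itself from a snapshot taken at an arbitrary instant, and while $L$ or $r$ is in transit the intermediate configuration may cause a third robot to compute itself as an expected leader and move. Propositions~\ref{lemma44}--\ref{lemma3oo} are statements about a single fixed asymmetric configuration and say nothing about these transient configurations. Second, in Configuration-BI your conclusion that ``whichever leader completes its Move first lands on a stationary point'' breaks when one expected leader's first clockwise neighbour is the other expected leader, which can happen: take $n=3$ with $L$, $r$, and the robot $r''$ antipodal to $r$ placed so that the clockwise order is $L,r'',r$; then $(r,L)$ is empty, so $s=L$, and if $L$ departs for $r''$ while $r$ is still in flight, $r$ arrives at an empty point. (The lemma survives because $L$ itself lands on the genuine follower $r''$, but that is a different argument from the one you give.) Relatedly, your claim that a confused $L$ in BI must have a safe neighbour does not follow from the propositions you cite; the relevant observation is rather that an unsafe confused $L$ falls into Case-II of the algorithm (since $C_0(L)$ is the actual configuration and contains the confused leader $r$), hence stays put and leaves $r$ to create the multiplicity.

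Third, the portion you yourself label the ``technical heart'' --- the \texttt{moveHalf}/\texttt{moveMore} trajectory of $r$ in Configuration-BII interleaved with $L$'s motion --- is described rather than proved: you do not verify that the antipodal-arc tests are exhaustive, that the $\tfrac{\theta}{2}$ and $\tfrac{3\theta}{4}$ counterclockwise corrections land on occupied points under every interleaving, or that the procedure never produces two antipodal multiplicity points (the entire purpose of the safety test, since under $\pi$-visibility such multiplicities are mutually invisible and would stall the subsequent gathering phase). A simplification you pass over: in Configuration-BII the two expected leaders cannot both be confused, for if they were, Proposition~\ref{lemma3oo} would force their clockwise neighbours to be non-antipodal and $s$ would be safe; hence $L$ is a sure leader in BII and moves unconditionally, so the proof should centre on showing that $L$'s landing point is and remains occupied while $r$'s state-machine excursion terminates without interfering. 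As it stands, the proposal is a plan for a proof rather than a proof.
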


\begin{lemma}\label{onecl}
 If the initial configuration is type configuration-C, then after finite time execution of Algorithm~\ref{algo:gathering} at least one multiplicity point will form.
 \end{lemma}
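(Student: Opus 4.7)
The plan is to reduce the evolution to activations of the single expected leader $r$ of the given configuration and then to trace $r$ through the finite chain of states \texttt{off}, \texttt{moveHalf}, \texttt{moveMore} produced by the algorithm. First I would observe that in a configuration-C the unique expected leader $r$ must be the true leader of the actual configuration, because every true leader is at least a confused leader, so if only one expected leader exists it has to coincide with the true leader. Since $r$ is both true leader and confused, Corollary~\ref{Cor1} forces the antipodal position of $r$ to be empty, hence the actual configuration coincides with $C_0(r)$. All remaining robots are follower robots and therefore never move, so throughout the execution the configuration changes only through activations of $r$.

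Next I would trace $r$'s state machine. From \texttt{off}, since $r$ is confused, has an unsafe clockwise neighbour $s_0$, and $C_0(r)$ contains no other confused leader, the algorithm moves $r$ by $\theta/2$ clockwise and sets its state to \texttt{moveHalf}, where $\theta$ is the leading angle. At the next activation, the ``not safe'' hypothesis forces the antipodal point $p$ of $s_0$ to be occupied, so the test whether the clockwise neighbour of $r$ is non-antipodal fails; the algorithm then inspects the arc $[p-\theta/2,p+\theta/2)$ and either (a) promotes $r$ to \texttt{moveMore} with an additional $\theta/4$ clockwise move, or (b) terminates $r$ with a $\theta/2$ counterclockwise retreat. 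A parallel analysis of \texttt{moveMore} shows that either $r$ steps onto $s_0$, producing a multiplicity point immediately, or it terminates with a $3\theta/4$ counterclockwise retreat.

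The main obstacle will be the retreat branches, where $r$ terminates without itself landing on another robot's position. My plan is to exploit the structural information coming from the ``unsafe'' condition, namely that Proposition~\ref{lemma00} forces the hypothetical true leader of $C_1(r)$ to coincide with $r$'s antipodal point, together with the leading-angle constraints coming from Propositions~\ref{lemma44} and~\ref{lemma3oo}, to rule out the simultaneous failure of the two arc tests. Concretely, I would show that any robot triggering the \texttt{moveHalf} retreat must lie in a clockwise neighbourhood of $p$ that, after the additional $\theta/4$ advance, falls outside the strictly smaller arc $[p-3\theta/4,p+\theta/4)$ inspected in \texttt{moveMore}, so that at least one of the two advances must succeed and $r$ is forced to step onto $s_0$. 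This elimination step is where I expect the delicate work to lie, and it rests on a careful case split on the position of any robot in the suspect arc relative to $s_0$ and $p$ and on $r$'s successive antipodal points. As a fallback, if this elimination proves incomplete in some sub-case, I would argue instead that the intermediate snapshots produced during $r$'s clockwise motion cause some follower to satisfy the sure-leader, or the safe confused-leader, predicate at the moment of its next look; that sub-case is then reduced to a configuration of type A or B, which has already been handled by Lemmas~\ref{onesure} and~\ref{oneslonecl}, and again a multiplicity point is created in finite time.
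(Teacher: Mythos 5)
Your setup is sound and matches what the paper's framework requires: in a configuration-C the unique expected leader $r$ must be the true leader (the true leader is always an expected leader), so by Corollary~\ref{Cor1} its antipodal point $A_0$ is empty and the actual configuration is $C_0(r)$; the unsafe condition supplies a robot $w$ at the antipodal of $r$'s clockwise neighbour $s_0$, so the ``non-antipodal, terminate'' branch never fires; and you correctly spot that Proposition~\ref{lemma00} forces the true leader of $C_1(r)$ to be the phantom robot at $A_0$ itself. But your closing step misreads the control flow of Algorithm~\ref{algo:gathering} and therefore proves the wrong disjunction. The two arc tests are not alternative attempts: if the \texttt{moveHalf} test finds a robot, $r$ retreats $\theta/2$ counterclockwise and enters \texttt{terminate} --- it never reaches \texttt{moveMore}; and if the \texttt{moveHalf} test passes but the \texttt{moveMore} test fails, $r$ retreats $3\theta/4$ and terminates. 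In either case $r$ is back at its original position, permanently terminated, no other robot is an expected leader, and no multiplicity point ever forms. So ``at least one of the two advances must succeed'' does not yield the lemma; you need \emph{both} tests to report an empty arc.

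Fortunately the fact you already extracted gives exactly that, and the elimination argument you anticipate being delicate is unnecessary. Since the true leader of $C_1(r)$ is the phantom at $A_0$ and (by unsafety) its first clockwise neighbour is $w$ at the antipodal of $s_0$, i.e.\ at angular position $A_0+\theta$, the open arc $(A_0,A_0+\theta)$ contains no robot; and $A_0$ itself is empty because the actual configuration is $C_0(r)$. Both inspected arcs equal $[A_0,A_0+\theta)$ in absolute terms (centred at $r$'s successive antipodal points), so both tests necessarily pass, $r$ runs \texttt{off} $\to$ \texttt{moveHalf} $\to$ \texttt{moveMore} $\to$ \texttt{off} and lands on $s_0$, creating the multiplicity point. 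Your fallback is not a valid repair: Lemmas~\ref{onesure} and~\ref{oneslonecl} concern \emph{initial} configurations of types A and B, not mid-execution configurations in which $r$ carries a non-\texttt{off} state, and a terminated $r$ can never act again. Finally, your assertion that the followers ``therefore never move'' needs an argument (a follower could in principle become an expected leader in one of the intermediate configurations created while $r$ is in transit, since activations are asynchronous); this should be checked using Propositions~\ref{lemma44} and~\ref{lemma30} for the configurations with $r$ at $\theta/2$ and $3\theta/4$, not merely asserted.
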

 % \begin{proof}
 % See the Appendix.
 % \end{proof}

% Now there may arise two cases: $s'$ starts its move or not. If $s'$ starts its move then when the next time $r$ robot wakes up it sees that its clockwise first neighbor is not antipodal, so it changes its state to \texttt{terminate}. So when the $s'$ reaches its clockwise first neighbor then a multiplicity point will form. 

% But if $s'$ does not start its move when $r$ wakes up next time then $r$ can not see any robot in the initial antipodal position of $r$ and also in the antipodal arc of $cwAngle(r,s)$. But $r$ can not sure about that if there exists any robot in the antipodal position of it at that moment. So the confused leader $r$ changes its state to \texttt{moveMore} and moves to the midpoint of the leading angle. After this move, $r$ when get activated can check if there exists any robot in the antipodal of $r$ so then there will be a robot in the antipodal arc $cwAngle(r,s)$ or a  multiplicity point within its visibility. Again if $r$ sees that the clockwise first neighbor is not antipodal then it changes its state to \texttt{terminate}. But if $s$ and $s'$ are antipodal then $r$ is now sure that its clockwise neighbor is safe. So $r$ will move to $s$ and make a multiplicity point. So we can conclude that at least one multiplicity point will form.  
    \begin{figure}[ht] 
     \centering 
     \includegraphics[width=1\linewidth]{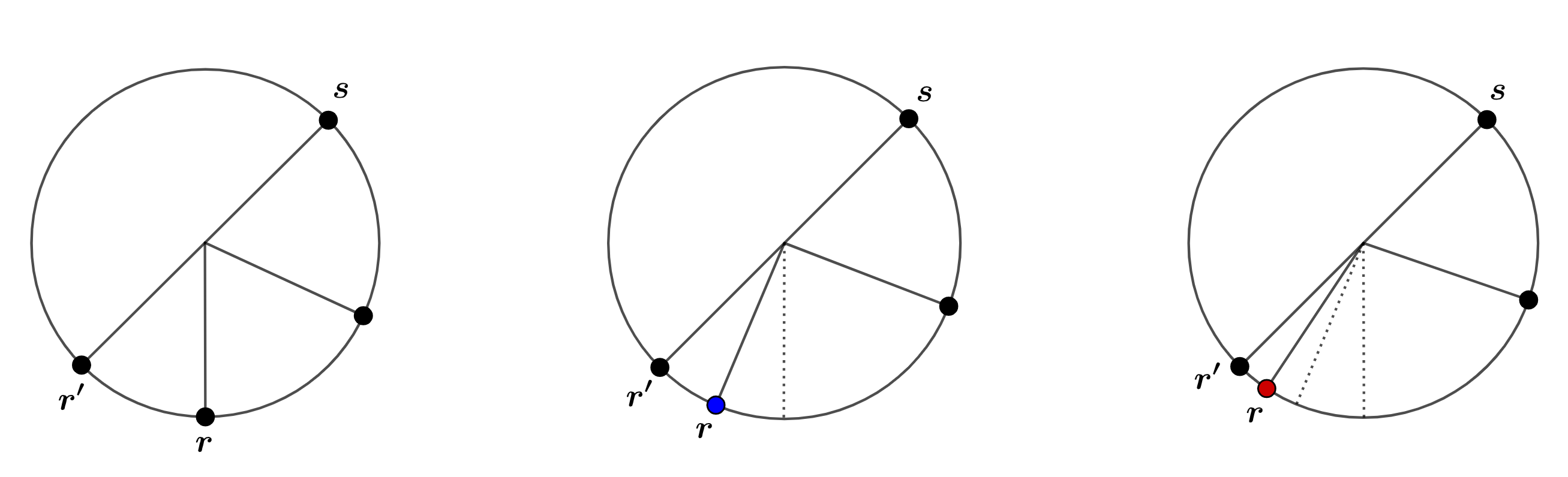}
     \caption{Movement of one confused leader ($r$) in type configuration-C where initial color of $r$ is \texttt{off}, blue color indicates \texttt{moveHalf} and red color indicates \texttt{moveMore}}
     \label{Fig:c}
    \end{figure}

 \begin{theorem}\label{theo}
Let $\mathcal{C}$ be a rotationally asymmetric configuration with no multiplicity point, then after finite execution of Algorithm~\ref{algo:gathering} by the robots on $\mathcal{C}$ at least one multiplicity point will be created.
 \end{theorem}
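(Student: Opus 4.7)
The plan is to reduce the theorem to the three lemmas already proved (Lemma~\ref{onesure}, Lemma~\ref{oneslonecl}, and Lemma~\ref{onecl}) by showing that the hypothesis ``$\mathcal{C}$ is a rotationally asymmetric configuration with no multiplicity point'' forces $\mathcal{C}$ to be of exactly one of the three types \textit{Configuration-A}, \textit{Configuration-B}, or \textit{Configuration-C}. Once this exhaustive case distinction is in place, the conclusion follows immediately by invoking the corresponding lemma.

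First I would recall the structural dichotomy established earlier: by Proposition~\ref{lemma3} together with the existence of a unique true leader, the set of expected leaders of $\mathcal{C}$ has cardinality either one or two, giving the four exhaustive sub-cases listed just before Figure~\ref{leader0} (one sure leader; one confused leader; two confused leaders; one sure and one confused leader). I would then re-read the categorisation into \textit{Configuration-A/B/C} and explicitly match the four sub-cases to it: cases where the only expected leader is the true leader (whether sure, or confused with safe clockwise neighbour) fall under \textit{Configuration-A}; cases with exactly two expected leaders fall under \textit{Configuration-B} (split into \textit{BI} and \textit{BII} according to whether the non-true confused leader sees its clockwise neighbour as safe); and the remaining case, in which the only expected leader is a confused (true) leader whose clockwise neighbour is unsafe, falls under \textit{Configuration-C}. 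Since these conditions are mutually exclusive and jointly exhaustive, every admissible $\mathcal{C}$ belongs to exactly one of the three types.

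To finish, I would apply Lemma~\ref{onesure} if $\mathcal{C}$ is of type \textit{Configuration-A}, Lemma~\ref{oneslonecl} if $\mathcal{C}$ is of type \textit{Configuration-B} (covering both \textit{BI} and \textit{BII}), and Lemma~\ref{onecl} if $\mathcal{C}$ is of type \textit{Configuration-C}. Each of these guarantees that after finite-time execution of Algorithm~\ref{algo:gathering} at least one multiplicity point forms, so the same conclusion holds for $\mathcal{C}$.

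The only real subtlety in this plan is bookkeeping: I must make sure that the phrase ``confused leader's clockwise first neighbour is safe/unsafe'' is well-defined in every sub-case (using Definition of Safe neighbour) and that the partition into \textit{A}, \textit{BI}, \textit{BII}, \textit{C} indeed covers the case where the true leader itself is a confused leader (handled by Corollary~\ref{Cor1} on its empty antipode). No genuinely new argument is required beyond this case-check; the heavy lifting is entirely carried by the three lemmas.
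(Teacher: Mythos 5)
Your proposal is correct and follows essentially the same route as the paper: the paper's proof simply observes that Configuration-A, Configuration-B, and Configuration-C exhaust all rotationally asymmetric configurations with no multiplicity point and then cites Lemmas~\ref{onesure}, \ref{oneslonecl}, and \ref{onecl}. Your version merely spells out the case-matching (one vs.\ two expected leaders, safe vs.\ unsafe clockwise neighbour) in more detail than the paper does.
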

  \begin{proof}
 Let $\mathcal{C}$ be a rotationally asymmetric configuration with no multiplicity point. Then there can be three exhaustive possible configurations, Configuration-A, Configuration-B and Configuration-C. 
 % \begin{enumerate}
 %     \item If only one sure leader exists or
 %     \item If one sure leader and one confused leader exist or
 %     \item If two confused leaders exist or
 %     \item If one confused leader exists
 % \end{enumerate}
 
 From lemma~\ref{onesure}, lemma~\ref{oneslonecl}, and lemma~\ref{onecl} we can say that in any type of configuration at least one multiplicity point will form.
  \end{proof}

  \begin{lemma}\label{twomult}
 From any rotationally asymmetric configuration with no multiplicity point, by finite time execution of Algorithm~\ref{algo:gathering} the robots can form at most two multiplicity points and then all robots gather at a point on the circle.
 \end{lemma}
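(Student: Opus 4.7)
The plan is to invoke Theorem~\ref{theo} to guarantee that at least one multiplicity point is eventually created, and then prove two complementary statements: (i) at any instant the configuration contains at most two multiplicity points, and (ii) once one or two multiplicity points exist, all remaining robots end up at a single point in finite time and remain there. Throughout, the asynchronous scheduler is the main subtlety, because robots in the intermediate states \texttt{moveHalf} and \texttt{moveMore} may be in transit while a multiplicity point appears elsewhere, and we must argue their subsequent behaviour stays consistent with the bookkeeping the algorithm relies on.

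For (i) I would argue that in any configuration without multiplicity only expected leaders move, since follower robots return immediately in Algorithm~\ref{algo:gathering}. By Proposition~\ref{lemma3} there are at most two expected leaders, and an inspection of the algorithm shows that each leader's motion creates at most one fresh multiplicity point (either directly by jumping onto its clockwise neighbour, or after the calibrated $\theta/2,\theta/4$ protocol pins the robot at a predictable location that becomes a new multiplicity). Once at least one multiplicity exists, the branches that non-multiplicity robots execute all route them \emph{into} an existing multiplicity rather than create a new one. A short case analysis on Configuration-A, Configuration-B, and Configuration-C then bounds the total number of multiplicity points by two.

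For (ii) I would use a potential argument, where the potential is the number of distinct occupied points. Every non-multiplicity robot adjacent (clockwise or counter-clockwise) to a multiplicity point eventually activates by fairness of the \textsc{ASync} scheduler, executes the ``move to the closer multiplicity'' branch, and merges in, strictly decreasing the potential. After finitely many such events the only occupied points are the one or two multiplicity points. If one remains, we are done; if two remain, I would show the pair must lie at clockwise angular distance strictly less than $\pi/2$, after which the ``only visible position is a multiplicity point'' rule triggers one of them to move onto the other. Termination is secured by the observation that a single multiplicity point, once alone, sees no other robot and no remaining branch of the algorithm fires.

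The hard part, as I see it, is precisely certifying the $<\pi/2$ separation of the two multiplicity points whenever two are produced. This requires tracing through the geometry of Configurations B and C: in Configuration-B one uses Proposition~\ref{lemma44} (the two expected leaders are at clockwise distance $\ge \pi$, so their clockwise neighbours, where the multiplicities land, are correspondingly close on the complementary arc) together with Proposition~\ref{lemma3oo} forbidding antipodal clockwise neighbours; in Configuration-C one uses the fact that the \texttt{moveHalf}/\texttt{moveMore} motions contract rather than dilate the relevant arc and deposit the moving leader at a controlled fraction of its original leading angle. Verifying this separation rigorously, and checking that no two robots caught in different intermediate states can conspire under \textsc{ASync} delays to create a third multiplicity point, should occupy the bulk of the technical work.
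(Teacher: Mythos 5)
Your decomposition matches the strategy the paper announces at the top of its Correctness section (first bound the number of multiplicity points by two, then collapse them into one), and the absorption argument via the ``move to the closer multiplicity point'' branch is sound. However, the proposal leaves unproved exactly the two claims on which the lemma stands or falls, and the sketch you offer for the harder one does not go through. The final-merge branch of Algorithm~\ref{algo:gathering} only fires when a robot sitting at one multiplicity point finds the other multiplicity point at clockwise angular distance strictly less than $\pi/2$; if the two multiplicity points end up separated by a clockwise angle in $[\pi/2,\,3\pi/2]$ from each of them, neither branch fires and gathering deadlocks. You propose to derive the required separation in Configuration-B from Proposition~\ref{lemma44} (the two expected leaders satisfy $cwAngle(L,r)\ge\pi$) together with Proposition~\ref{lemma3oo} (their clockwise neighbours are not antipodal). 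That inference fails: place $L$ at angle $0$ with clockwise neighbour near $\varepsilon$ and the other confused leader near angle $\pi$ with clockwise neighbour near $\pi+\varepsilon'$; the two landing sites can be arbitrarily close to antipodal without being exactly antipodal, so Proposition~\ref{lemma3oo} does not rescue the bound. Either a much finer analysis of where the two multiplicities can land is needed, or the threshold in the merge rule has to be reconciled with the paper's own prose, which describes this branch with ``clockwise distance less than $\pi$'' rather than $\pi/2$ (under the $<\pi$ reading only exact antipodality is problematic, and that case would still need to be excluded).

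The second gap is the bound of two multiplicity points under \textsc{ASync}. Your argument is that only the at most two expected leaders move in a multiplicity-free configuration, and that once a multiplicity exists every branch routes robots into it. But a robot may have taken its snapshot \emph{before} any multiplicity existed and still be pending or in transit when one forms, so what must actually be bounded is the number of such in-flight robots whose moves terminate at fresh points; this is precisely where the paper's assumption that Look and Compute together are non-instantaneous, and the \texttt{moveHalf}/\texttt{moveMore}/\texttt{terminate} state machine, have to be used concretely, and your proposal names the difficulty without resolving it. As written, the proposal is a reasonable roadmap --- essentially the decomposition the paper itself states --- but not yet a proof.
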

 % \begin{proof}
 % See the Appendix. 
 % \end{proof}

 %  \begin{lemma}\label{lemma9}
 % Starting from any rotationally asymmetric configuration Algorithm~\ref{algo:gathering} never allows to form two multiplicity points which are antipodal to each other.
 % \end{lemma}
 % \begin{proof}
 %     By algorithm~\ref{algo:gathering} the multiplicity points will form in the clockwise neighbors of the expected leaders. In configuration-A and configuration-C there exist one expected leader in the initial configuration. Let $r$ be the expected leader and its clockwise neighbor is $s$. Let $s'$ be the antipodal position of $s$. As we discuss earlier that by the movement of $r$, $s'$ can be elected as expected leader. So here we can see that in these two type of configuration the expected leaders are not antipodal of each other and also their neighbors are not antipodal. So as we prove in lemma~\ref{twomult} that at most two multiplicity points will form and so the two multiplicity points form at the neighbors of the two expected leaders are not antipodal.
 % \end{proof}
 % \begin{theorem}
 % Let at time $t$ the swarm forms a rotationally asymmetric configuration with no multiplicity points then at time $t+1$ by the movement of robots, swarm forms a rotationally asymmetric configuration with no multiplicity points.
 % \end{theorem}
 %  \begin{theorem}\label{theorem1}
 % In an asymmetric configuration if one or two multiplicity points created, then eventually all robots gather and no longer move.
 % \end{theorem}
 Hence, we can conclude the following theorem.
  \begin{theorem}
 There exists a gathering algorithm that gathers any set of robots with finite memory and $\pi$ visibility from any initial rotationally asymmetric configuration under asynchronous scheduler.   
 \end{theorem}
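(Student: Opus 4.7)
The plan is to exhibit Algorithm~\ref{algo:gathering} as the witnessing gathering algorithm and then to deduce its correctness by concatenating the two structural results already established: Theorem~\ref{theo}, which produces at least one multiplicity point in finite time starting from any rotationally asymmetric configuration without multiplicity, and Lemma~\ref{twomult}, which takes over from the moment a multiplicity point appears and drives the robots to a single gathering point in finite time (controlling also that no more than two multiplicity points are ever present). So the proof really is a one-line composition at the top level, and the substance of the work has been pushed into the supporting lemmas.

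In more detail, I would argue as follows. Fix an initial configuration $\mathcal{C}$ that is rotationally asymmetric and contains no multiplicity point, and let the robots execute Algorithm~\ref{algo:gathering}. By the exhaustive case decomposition into Configuration-A, Configuration-B and Configuration-C, together with Lemma~\ref{onesure}, Lemma~\ref{oneslonecl} and Lemma~\ref{onecl}, Theorem~\ref{theo} gives a finite time $t_1$ after which the observed configuration contains at least one multiplicity point. From $t_1$ onward, Lemma~\ref{twomult} applies to the (possibly now different) current configuration; it guarantees that any further multiplicity creation is bounded (at most two points) and that a further finite amount of execution leads all robots to a common point on the circle, after which the algorithm prescribes no additional motion. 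Concatenating the two finite time bounds yields gathering in finite time, proving the theorem.

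The main obstacle is not in this top-level argument, which is essentially a bookkeeping step, but in checking that the hypotheses of the two ingredient results remain compatible across the handoff at time $t_1$. Concretely, Theorem~\ref{theo} is stated for a configuration with no multiplicity point, whereas Lemma~\ref{twomult} is stated for the regime once multiplicities exist, and one must make sure that the transition moment is covered by exactly one of the two statements and that no robot is left in an intermediate state (\texttt{moveHalf} or \texttt{moveMore}) whose pending move could, after $t_1$, destroy a multiplicity already formed or break the invariants used in Lemma~\ref{twomult}. I would therefore spend the proof's attention on verifying, using the multiplicity-handling branches of Algorithm~\ref{algo:gathering} (the rules for robots located at, adjacent to, or seeing a multiplicity point), that once the first multiplicity appears every still-pending LCM cycle either reinforces an existing multiplicity or leaves it untouched; given the asynchronous assumption that the Look--Compute phase is non-instantaneous, any robot finishing a delayed move will, in its next cycle, observe the newly formed multiplicity and switch to the multiplicity branch. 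With this compatibility checked, the theorem follows immediately from Theorem~\ref{theo} and Lemma~\ref{twomult}.
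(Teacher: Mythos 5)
Your proposal matches the paper's own route exactly: the paper derives this theorem as an immediate consequence of Theorem~\ref{theo} (a multiplicity point forms in finite time) together with Lemma~\ref{twomult} (at most two multiplicity points arise and the robots then gather at one point), with Algorithm~\ref{algo:gathering} as the witness. Your additional attention to the handoff between the no-multiplicity and multiplicity regimes is a sensible refinement of the same composition, not a different argument.
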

 % \begin{proof}
 % Here we first show in lemma~\ref{twomult} that from any asymmetric configuration when visibility $\pi$ at most two non antipodal multiplicity points can be formed. So we conclude by theorem~\ref{theorem1} that from two non antipodal multiplicity points, robots gather at a point on the circle. So we can conclude that from any asymmetric configuration with no multiplicity all robots gather at a point on the circle.
 % \end{proof}
\section{Conclusion}
In this paper, we present a gathering algorithm of robots with finite memory on a circle under an asynchronous scheduler with visibility $\pi$. Robots are initially at distinct positions on the circle forming any rotationally asymmetric configuration. We assume that each robot has finite persistent memory. For future studies on this problem, it will be interesting, if one can give a gathering algorithm when robots are oblivious or the visibility is less than $\pi$.

%

% ---- Bibliography ----
%
% BibTeX users should specify bibliography style 'splncs04'.
% References will then be sorted and formatted in the correct style.
%
 \bibliographystyle{splncs04}
 \bibliography{samplepaper}

\end{document}